\newtheorem{assumption}{Assumption}
\newtheorem{theorem}{Theorem}
\newtheorem{lemma}{Lemma}
\newtheorem{definition}{Definition}
\newtheorem{remark}{Remark}
\newtheorem{proposition}{Proposition}
\title{A Rate Adaptation Algorithm for Tile-based 360-degree Video Streaming}
\author{Arnob Ghosh, Vaneet Aggarwal, and Feng Qian\thanks{A. Ghosh and V. Aggarwal are with the School of IE, Purdue University, West Lafayette IN 47907 (email: \{ghosh39, vaneet\}@purdue.edu). F. Qian is  with the Computer Science Department, Indiana University, Bloomington IN 47405 (email: fengqian@indiana.edu).}}
\date{}
\begin{document}
\maketitle
\begin{abstract}
In the 360-degree immersive video, a user only views a part of the entire raw video frame based on her viewing direction. However, today's 360-degree video players always fetch the entire panoramic view regardless of users' head movement, leading to significant bandwidth waste that can be potentially avoided. In this paper, we propose a novel adaptive streaming scheme for 360-degree videos. The basic idea is to fetch the invisible portion of a video at the lowest quality based on users' head movement prediction, and to adaptively decide the video playback quality for the visible portion based on bandwidth prediction. Doing both in a robust manner requires overcome a series of challenges, such as jointly considering the spatial and temporal domains, tolerating prediction errors, and achieving low complexity. To overcome these challenges, we first define quality of experience (QoE) metrics for adaptive 360-degree video streaming. We then formulate an optimization problem and solve it at a low complexity. The algorithm strategically leverages both future bandwidth and the distribution of users' head positions to determine the quality level of each tile (i.e., a sub-area of a raw frame). We further provide theoretical proof showing that our algorithm achieves optimality under practical assumptions. Numerical results show that our proposed algorithms significantly boost the user QoE by at least  20\% compared to baseline algorithms.

\end{abstract}

\begin{IEEEkeywords}
360-degree video, FoV estimation, Convex Optimization, Bandwidth Uncertainty, Stochastic Optimization.
\end{IEEEkeywords}
\section{Introduction}
\subsection{Motivation}
The {\em 360-degree} technology is shaping the video industry. 360-degree videos provide users a panoramic view creating a unique viewing experience. 360-degree videos, also known as immersive or spherical videos, are essential parts of the virtual reality (VR) which are changing the user's experience of video streaming. VR is projected to form a big market of \$120 billion by 2020 \cite{2020}.

 360-degree videos are recorded using the omnidirectional cameras. While watching the video, the user can change the viewing direction by changing the position of the head so that it can look for any location within the video. Typically, the user wearing a VR headset (e.g., the Google Cardboard) can adjust her orientation by changing the pitch, yaw, and roll of the device which corresponds to the X, Y and Z axes, respectively (Fig.~\ref{fig:360}). The filed-of-view (FoV) defines the extent of the user's observable portion. It is typically fixed for a VR headset (e.g., 90-degree vertically and 110-degree horizontally).  The video is divided into several chunks. 

\begin{figure}
\includegraphics[trim=3in 1.5in 1in 2.8in, clip, width=0.5\textwidth]{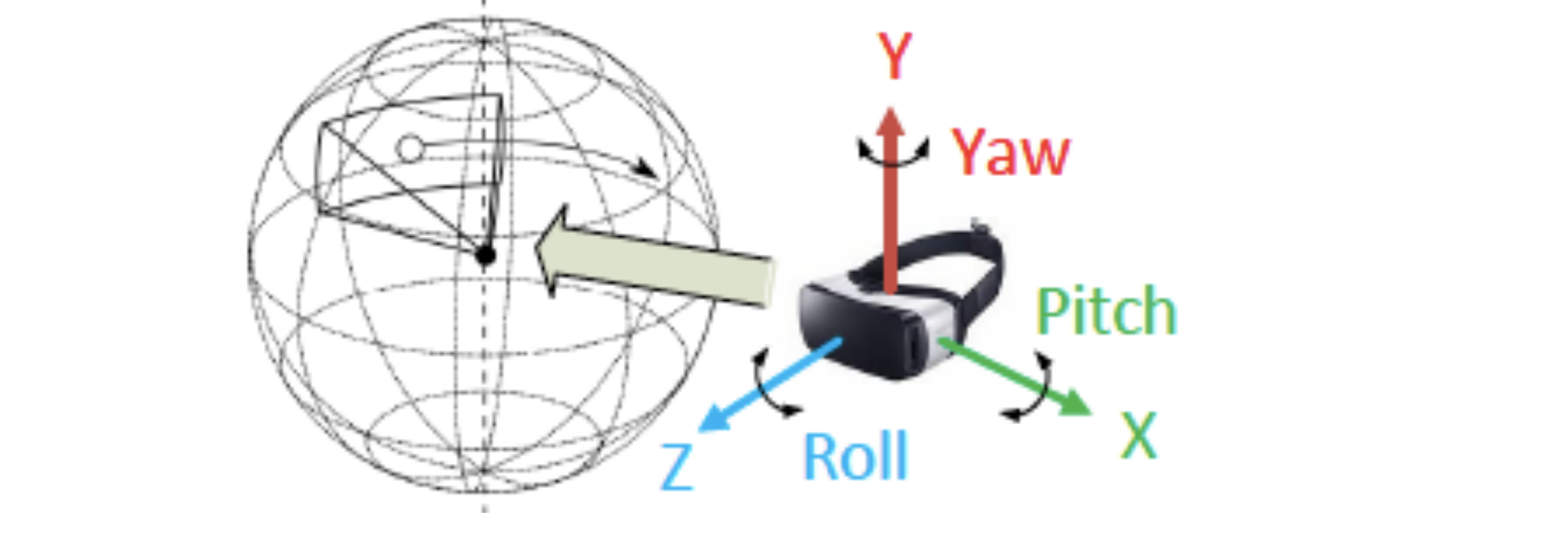}
\vspace{-0.2in}
\caption{Adjust viewing direction during 360 degree video playback}
\label{fig:360}
\vspace{-0.2in}
\end{figure}
 360-degree videos are very popular on major video platforms such as YouTube, Facebook.   However, the current popular technologies for streaming  try to fetch the all the portion of the chunk in the same quality including both the visible and invisible portions. Though this method is simple, it has some disadvantages. For example, the bandwidth utilization is high as the chunks in the 360-degree videos are of larger sizes compared to the traditional ones. Thus, if the network is congested or the bandwidth is low, it will lead to a poor quality video.  Hence, without smart algorithms, it can easily consume the wireless bandwidth\cite{bandwidthhungry}. Even the wireline capacity may not be enough for such 360-degree videos\cite{wireline}. Although significant progress has been made in developing VR technologies, the research community still lacks a bandwidth-efficient streaming algorithm for 360-degree videos for maximizing the quality of service (or quality of experience) of the users.

 \subsection{Our Contribution}
 We consider that the 360-degree video service provider wants to maximize the quality of experience (QoE) given the limited bandwidth. We assume that the user gets a utility depending on the rate at which a chunk is being downloaded. 
 The key idea of our approach  is that instead of downloading the entire panoramic view, a video player can download the portions of the chunk which is more likely to be viewed. 
This requires spatially pre-segmenting a 360-degree video chunk into multiple segments, which we call \emph{tiles}. A tile (as opposed to a chunk) is the smallest downloadable content unit in our scheme.
 The idea of downloading the visible portion of 360-degree videos in not new \cite{bao,feng}. However, none of these studies proposed a full-fledged streaming algorithm that adapts the streaming behavior at both the temporal domain (which quality to fetch for each chunk) and the spatial domain (which tiles to fetch for each chunk). Additionally, \cite{bao,feng} did not consider the FoV distribution and bandwidth estimation which can be leveraged to better guide the streaming algorithm. We therefore seek to contribute in this space. We assume that the distribution of the FoV can be obtained from the viewing history, the nature of the video, and even from the head movement of the user.\footnote{Recent papers \cite{bao,feng} shows that the head movement can be accurately predicted for a short duration.} Given the distribution of the FoV and the bandwidth estimation, our goal is to design a streaming algorithm that maximizes the QoE.  

In order to find an optimal algorithm for the QoE we have to first define the QoE metric. To the best of our knowledge, a concrete formulation of the QoE of a user in a 360-degree video is still missing. We, thus, begin with  providing the metric for QoE in the 360-degree video (Section~\ref{sec:qos}). The  stall duration ({\it i.e.}, the total time the playback's play-out buffer does not have any content to render) should be minimum\cite{yin}.    The user's viewing experience depends on the tiles' qualities within the FoV. If the quality of a tile within the chunk is too low, the quality will also be low.\footnote{Some encoding schemes such as Scalable Video Coding (SVC) can adapt to the lowest quality tile, by only displaying the lowest quality one in order to avoid quality variation.}  Hence, the user's QoE should be defined by the minimum rate among the tiles within the FoV. However, such a metric can be too pessimistic. For example, if many of the tiles except a few are of good quality in the FoV, the QoE may be high. Hence, we consider the expected viewing experience as the sum of the average playback rate of the visible tiles and the minimum of the rate among the tiles within the visible area. We maximize the above and meanwhile minimize the stall duration. 

%

Next, we formulate the problem as an optimization problem. However, it turns out to be non-convex because of the discrete variable space as the available download rates can only take discrete values. 
To address this challenge, we first relax the above constraint and formulate a relaxed problem which turns out to be convex. We then design a lightweight algorithm that computes the final scheduling decision where the selected bitrate belongs to a set of pre-defined discrete values (Section~\ref{sec:heuristic}). We also propose an online version of the above algorithm which can adapt the bandwidth or FoV variation dynamically.

Although our first approach can maximize the expected QoE, it does not guarantee the bit-rate for the FoV with a high probability. If the FoV estimation has a lot of variances, the optimal solution which maximizes the expected QoE may render poor quality videos with a significant probability. \cite{feng,bao} also mentioned that the predicted head movement may not be accurate for a larger duration. Thus, as a second metric, we consider the QoE as the  bit-rate such that FoV will be at least that rate with probability $\alpha$ (Section~\ref{sec:robust}). The video service  provider wants to maximize the difference between the above rate and the weighted total stall time. We formulate the above problem as an optimization problem. 

 We propose an algorithm to solve the above problem.    We, also, provide a linear complexity heuristic algorithm which is optimal for a linear utility under the assumption that the FoV estimation is more accurate in the near future as compared to the distant future (Theorem~\ref{thm:optimal}). The above heuristic is similar to a greedy approach, where it will try to fetch the tiles  with a higher quality for a given chunk given that it does not exhaust the bandwidth and then will proceed to the subsequent chunks and so on. 

We, numerically, evaluate the strength of our proposed algorithms compared to some readily implementable baseline algorithms (Section~\ref{sec:simulation}). We consider two types of baseline algorithms, the {\em first} algorithm downloads the entire chunk in the same quality. The second one is a greedy type of algorithm and similar to the ones proposed in \cite{bao,feng}. This algorithm only fetches the tiles which have the highest probability to be part of the FoV.  In both of these baseline algorithms, we assume that if the bandwidth permits they will first try to fetch the current chunks at  higher qualities rather the subsequent chunks. We show that the QoE is greatly enhanced using our proposed algorithms compared to the above baseline algorithms. As our algorithm achieves at least 20\% higher QoE compared to the baseline algorithms. We also show that our algorithm provides 20\% higher average bit-rate within the FoV. Our algorithm significantly outperforms the baseline algorithms when the uncertainty regarding the bandwidth and the FoV prediction increase.

To summarize, our main contributions are
\begin{itemize}
\item We formulate different QoE metrics for the 360-degree video streaming. In the first one, we maximize the expected QoE (Section~\ref{sec:qos}). The second one tries to minimize the probability that the user will see very low quality videos (Section~\ref{sec:robust}).
\item We formulate the optimization problem where the video service provider maximizes the above QoE metrics.
\item We provide low-complexity algorithms which maximize each of the proposed QoE metrics (Sections~\ref{sec:heuristic},\ref{sec:robust}).
\item We also provide a theoretical result that shows that such low-complexity algorithm is optimal under some assumptions which can arise frequently in practice (Theorem~\ref{thm:optimal}).
\item Empirically, we show that our proposed algorithms can achieve significantly higher QoE given the same bandwidth compared to the baseline algorithms (Section~\ref{sec:simulation}).
\end{itemize}
\subsection{Related Literature}Many recent studies have studied algorithms for online video streaming under limited bandwidth which maximizes the QoE \cite{yin,jiang}. However, these papers did not consider the {\em 360-degree videos}. The 360-degree videos propose unique challenges and thus require new metric for the QoE. For example,  in the 360-degree video each chunk consists of several tiles. Hence, a video streaming algorithm now needs to find the rate at which each tile of a chunk has to be downloaded. In contrast, the above papers only need to find the rate at which a chunk has to be downloaded. In the 360-degree video the FoV depends on the tiles of a chunk a user is viewing, the user may not view all the tiles, or in a viewing some tiles may be of different qualities which can impact the QoE. Hence, a new QoE metric is required for the 360-degree videos depending on the FoV.

 Heuristic based algorithms for 360-degree video streaming have been proposed \cite{hosseini, ganainy}. These papers proposed sending only those tiles which are part of the FoV. However, when the user changes his view, then he has to wait a certain amount of time for new tiles to be downloaded.  Recently, \cite{feng,bao} proposed FoV prediction based algorithms for the 360-degree video streaming. These papers estimate the FoV in the future  and fetch the tiles within that FoV. However, these papers did not provide any QoE metric for the 360-degree videos. Thus, there is no optimality guarantee for the heuristic. Second, the authors considered a sliding window protocol type protocol, where they predicted the FoV for a short time in the future and fetched the tiles in the predicted FoV. However, if there is an error in estimation, or the user views in some other areas, then there will be stall as the algorithm does not fetch anything outside the FoV. Third, the algorithms provided in \cite{feng,bao} did not consider the download rates for future chunks. Thus, if the future bandwidth is limited, the user may observe poor quality video in the future.

In contrast, we consider various metrics for the QoE and formulate as a stochastic optimization problem. We consider both the distribution of the FoV estimation and the bandwidth. We provide low-complexity algorithms which maximize the above QoE metrics. 
\section{System Model}
Suppose that the video consists of  $K$ chunks. We consider a non-real time video streaming, {\it i.e.}, a video has been recorded previously and needs to be streamed to the users as per their requests.\footnote{However, the algorithms we propose can be extended to the real time streaming with a minor modification.}
Each chunk $k=1,\ldots, K$ has $N$ tiles in total of the same duration. The tiles constitute a sub-area within a chunk.\footnote{Tile based segmentation is shown to be good in some applications. Nevertheless, our analysis and algorithm will also go through for other segmentation methods. However, a comparative analysis of different segmentations is a work for the future.} A tile has the same duration as the chunk. Each chunk is of duration $L$ seconds. Hence, the total duration of the video is $KL$. 

\textbf{Field of View}: A viewing area for chunk $k$ belongs to the set  $\mathcal{V}_k$ which is the set of sub-sets of $l$ tiles among all tiles. Thus, the field of view (FoV) $V_{f,k}\in \mathcal{V}_k$ consists of $l$ tiles of a chunk. The video player can choose to download tile $i\in \{1,\ldots, N\}$ of chunk $k$ at the rate $R_{i,k}$. Thus, $L\sum_{i}^{N}R_{i,k}$ is the size of the chunk\footnote{There can be a constant overhead for each chunk which we ignore. Our analysis/algorithm will remain the same even with the constant overhead.} $k$.

\textbf{Encoding Scheme}: 
We consider that each chunk is encoded in different layers. Each layer corresponds to a bit-rate.
The bit-rate $R_{j}$ denotes the rate at the $j$-th layer. Thus, the tile $i$ within a chunk $k$ can be downloaded at rate  $R_{i,k}\in \{R_0,R_1,\ldots, R_{m}\}$ where $R_{m}>\ldots>R_1>R_0$. Hence, the set of possible rates is the same for each tile. 

Our approach can work both with the Adaptive Video Coding (AVC) \cite{avc}  or Scalable Video Coding (SVC) \cite{svc}.
 The main difference between AVC and SVC is that in AVC, each video chunk is stored into independent encoding versions while in SVC the encoding of one version may depend on that of another version. AVC is easy to implement and thus, it is most popular. However, SVC's unique encoding scheme has many advantages over the other state-of-the-art schemes.  For example,  if the chunk in AVC is not fully downloaded before its playback deadline, there will be a stall. However, this can be easily mitigated in the SVC, if a chunk can not be downloaded at layer $j+1$: it still can be played without any stall for layer up to $j$. 
We highlight the coding scheme that will be useful in each problem formulation or implementation of our proposed algorithms. 



\textbf{Download time and Bandwidth}: The video segments are downloaded into a {\em playback buffer} which contains downloaded but as yet unwatched video. Let $B(t)\in [0,B_{max}]$ denote the buffer occupancy at time $t$ , {\it i.e}., the play-time of the video left in the buffer at time $t$.  Chunk $k+1$ can only be downloaded once the chunk $k$ is downloaded. Once the chunk $k$ is downloaded, the video player waits a time $\zeta_k$ to start downloading the chunk $k+1$. We assume that $\zeta_k$ is small and it will not lead to the re-buffering events. Specifically, $\zeta_k$ is positive only when the buffer is full, otherwise, it is $0$. Let $t_k$ be the start time of downloading the chunk $k$. Hence,
\begin{align}\label{eq:tk}
t_{k}=t_{k-1}+\dfrac{L\sum_{i}R_{i,k}}{C_k}+\zeta_k,\quad t_0=0.
\end{align}
 where $C_k$ is the average bandwidth while downloading the chunk $k$, {\it i.e.}, in the interval $[t_k,t_{k+1}]$. {\em We assume that $C_k$ is known}.\footnote{Our model also allows the scenario where the bandwidth can be estimated on any time scale. If the time-scale overlaps with the download time $[t_k,t_{k+1}]$ one can easily get the total time by computing the time in each interval the bandwidth is estimated.} Note that even though the duration of a video can be a few minutes, the duration of chunk is only a few seconds. 
 Later, we show that for the online algorithm, we update the download rate for each chunk, hence, we need accurate prediction only for few seconds. The bandwidth can be predicted with high accuracy for such a short period \cite{proteus,vaneet}. Even if the mean of the bandwidth can not be accurately predicted, we will make a conservative approximation for the bandwidth and assume that $C_k$ be the worst possible bandwidth (or, the $C_k$ be the number such that the bandwidth during time $[t_k,t_{k+1}]$ will be higher than $C_k$ with a high probability).

\textbf{Play time of a chunk}: The buffer occupancy evolves as new chunk is downloaded. When a chunk is downloaded it is increased by $L$ and when the chunk is played, it decreases by $L$. We denote the play-time of the $k$th chunk as $\tilde{t}_k$, i.e., the $\tilde{t}_k$ is the time when the $k$-th chunk starts playing. Note that the $k$-th chunk can only start playing only when the k-th chunk is downloaded. Hence, for chunk $k>1$,
\begin{align}\label{b_k}
\tilde{t}_{k}=\max\{\tilde{t}_{k-1}+L,t_{k}+\dfrac{L\sum_{i}R_{i,k}}{C_k}\}, \quad \tilde{t}_1=t_{\mathrm{ini}}.
\end{align}
where $t_{\mathrm{ini}}$ is the initial start-up time or initial stall time. The initial start-up time is often considered to be constant.

\textbf{Stall time}: Note that each chunk constitutes $L$ amount of time of the original video. Hence, if $\tilde{t}_{k}>(k-1)L+t_{\mathrm{ini}}$, then, there will be stall or re-buffering. Thus, the total stall time is $(\tilde{t}_K-(K-1)L-t_{\mathrm{ini}})^{+}$.  Note from (\ref{b_k}) that $\tilde{t}_{K}\geq (K-1)L+t_{\mathrm{ini}}$.  Thus,
\begin{align}\label{stall}
(\tilde{t}_K-(K-1)L-t_{\mathrm{ini}})^{+}=\tilde{t}_K-(K-1)L-t_{\mathrm{ini}}.
\end{align}

\textbf{Maximum Buffer Occupancy}: As mentioned before we assume that as soon as chunk $k$ is downloaded, the video player starts downloading the chunk $k+1$. Hence, $\zeta_k$ is $0$ when the buffer is not full. However one exception is when the buffer is full, the player waits for the buffer to reduce to a level which allows downloading the next chunk.

We also assume that the maximum buffer occupancy is a multiple of $L$.\footnote{It is straight forward to extend to the setting when the maximum buffer occupancy is not a multiple of $L$.} The buffer can store at most $B$ chunks. Hence, we must have
\begin{align}\label{eq:max_buffer}
t_{B+k}\geq \tilde{t}_k\quad k=1,\ldots, K-B.
\end{align}
$\zeta_k$ is adjusted in (\ref{eq:tk}) such that the above constraint is satisfied. Currently, the maximum buffer occupancy $B$ can be very high, and, thus, the above constraint is almost always satisfied with $\zeta_k=0$ for all $k$.

\textbf{User's utility}
The user obtains an utility $U(\cdot)$ depending on the rate at which the video is being played. For example, if the chunk is played at rate $R$, the user's utility is $U(R)$ which denotes the user's satisfaction for getting the chunk at rate $R$. $U(\cdot)$ is a strictly increasing function as the user strictly prefers a higher rate. 

\section{Quality of Experience}\label{sec:qos}
The user's quality of experience (QoE) depends on the rate at which the tiles are downloaded in a viewing area. If a tile is  played at a poor quality, the overall user experience may decrease even though the rest of the tiles are of good qualities. Hence, for such a user, if the FoV is $V_{f,k}\in \mathcal{V}_k$, then, the utility for the user for chunk $k$ is
\begin{align}\label{eq:pess}
U(\min_{i\in V_{f,k}}R_{i,k}).
\end{align}
Recall that $V_{f,k}$ is the set of tiles of $l$ tiles. 

One option for the streaming service is to display all the tiles in the FoV at the same rate. This is where the SVC has an advantage since irrespective of the download rate of each tile, the tiles can be played at the lowest rate of the tiles in a viewing area ($\min_{i\in V_f}R_{i,k}$) by discarding the higher layers (enhancement layers).
Since $U(\cdot)$ is a strictly increasing function, the following holds.

However, the expression in (\ref{eq:pess}) may be too pessimistic approach to judge the quality. For example, if most of the tiles in the field of view are of good quality, the one with a slightly worse condition may not degrade the QoE much. Hence, we consider the QoE as the following for chunk $k$ if the FoV is $V_{f,k}\in \mathcal{V}_k$.
\begin{align}
U(\min_{i\in V_{f,k}}R_{i,k})+\gamma \sum_{i\in V_{f,k}}U(R_{i,k})
\end{align}
where $\gamma\geq 0$.  Note that the second term corresponds to the sum of the qualities of the tiles within the FoV. If $\gamma=0$, then the user's QoE is governed by the minimum rate of tiles in a viewing area. As $\gamma$ increases, the weight of the lowest rate tile decreases, and there will be more chances that the tiles within a FoV are of varying qualities.  Thus $\gamma$  is a tradeoff parameter that can be chosen for trading off between the minimum tile quality and the average tile quality.

\subsection{QoE maximization Problem}
The goal is to maximize the QoE and minimize the total stall time (cf. (\ref{stall})). Assuming that the FoV is known exactly beforehand the  optimization problem is
\begin{eqnarray}
\text{maximize } & \sum_{k=1}^{K}[U(\min_{i\in V_{f,k}}R_{i,k})+\gamma \sum_{i\in V_{f,k}}U(R_{j,k})]\nonumber\\& -\lambda(\tilde{t}_{K}-(K-1)L-t_\mathrm{ini})\nonumber\\
\text{subject to }&  (\ref{b_k}),(\ref{eq:tk}), (\ref{eq:max_buffer})\nonumber\\
& R_{i,k}\in \{R_0,\ldots, R_m\}\nonumber\\
\text{var}: & R_{i,k}\nonumber
\end{eqnarray}
where $V_{f,k}\in \mathcal{V}_k$ is the FoV for chunk $k$. The second term of the objective corresponds to the penalty associated with the total stall time.

\textbf{FoV Estimation}: Note that in the above formulation we assume that the FoV is known exactly before hand. However, in practice it may not be known beforehand. Recently, \cite{bao,feng} developed techniques to estimate the FoV for a shorter duration ahead. {\em We assume that the distribution of the FoV of each chunk is known beforehand.} This distribution can also be achieved from the other user's viewing history. Note that in the online version, we consider a sliding window type algorithm where we only need FoV estimation for few seconds in the future. \cite{bao,feng} showed that such a distribution of the FoV can be achieved for that short duration. \footnote{The consideration of any specific prediction method  is beyond the scope of this paper and left for the future.}

Since $U(\cdot)$ is a strictly increasing function, thus, 
\begin{lemma}
$U(\min_{i\in V_{f,k}}R_{j,k})=\min_{i\in V_{f,k}}U(R_{i,k})$.
\end{lemma}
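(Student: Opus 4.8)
The plan is to derive the identity directly from the monotonicity of $U$, treating the minimization over the finite tile set $V_{f,k}$ combinatorially rather than analytically. Since $V_{f,k}$ consists of exactly $l$ tiles, the minimum in each expression is attained at some tile, so the whole statement reduces to comparing the value of $U$ at the minimizing rate with the smallest of the $U$-values.

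First I would fix the chunk $k$ and let $i^{*}\in V_{f,k}$ be a tile achieving the minimum rate, i.e. $R_{i^{*},k}=\min_{i\in V_{f,k}}R_{i,k}$; such a tile exists because $V_{f,k}$ is finite. Then for every $i\in V_{f,k}$ we have $R_{i^{*},k}\le R_{i,k}$, and since $U(\cdot)$ is (strictly) increasing this gives $U(R_{i^{*},k})\le U(R_{i,k})$ for all $i\in V_{f,k}$. Hence $U(R_{i^{*},k})$ is a lower bound for the collection $\{U(R_{i,k})\}_{i\in V_{f,k}}$, and it is itself a member of that collection, so it equals the minimum:
\begin{align}
U\Bigl(\min_{i\in V_{f,k}}R_{i,k}\Bigr)=U(R_{i^{*},k})=\min_{i\in V_{f,k}}U(R_{i,k}).\nonumber
\end{align}
This chain of equalities is exactly the claim.

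I do not expect any real obstacle here: the result is a routine consequence of monotonicity, and in fact only the weaker property that $U$ is nondecreasing is needed for the identity (strict monotonicity, assumed in the model, is more than enough). The only point worth stating carefully is the existence of a minimizer $i^{*}$, which is guaranteed by the finiteness of $V_{f,k}$; this justifies writing $\min$ rather than $\inf$ and lets the argument conclude with an attained value on both sides. This lemma is the technical bridge that allows the pessimistic QoE term $U(\min_{i\in V_{f,k}}R_{i,k})$ to be rewritten as $\min_{i\in V_{f,k}}U(R_{i,k})$, which is the form subsequently exploited when reformulating the objective.
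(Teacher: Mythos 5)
Your proof is correct and rests on exactly the same idea as the paper, which states the lemma as an immediate consequence of $U(\cdot)$ being strictly increasing without spelling out the details; you simply fill in the routine argument (existence of a minimizer over the finite set $V_{f,k}$, monotonicity giving the lower bound, and membership giving attainment). Your observation that nondecreasing monotonicity suffices is a valid minor strengthening but does not change the approach.
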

Hence,
\begin{align}
\mathbbm{E}[U(\min_{i\in V_{f,k}}R_{i,k})]=\mathbbm{E}[\min_{i\in V_{f,k}}U(R_{i,k})].
\end{align}

 We maximize the expected QoE. Hence, using the above the optimization problem becomes
\begin{eqnarray}
\mathcal{P}: \text{max } & \sum_{k=1}^{K}\mathbbm{E}[\min_{i\in V_{f,k}}U(R_{i,k})+\gamma \sum_{i\in V_{f,k}}U(R_{i,k})]\nonumber\\& -\lambda(\tilde{t}_K-(K-1)L-t_{\mathrm{ini}})\label{eq:pe}\\
\text{subject to }&  (\ref{b_k}),(\ref{eq:tk}), (\ref{eq:max_buffer})\nonumber\\
\text{var}: & R_{i,k}\in \{R_0,\ldots, R_{m}\}\nonumber
\end{eqnarray}
The expectation is taken over the distribution of the FoV. 

\begin{remark}
In general, the users do not like  stalls, hence,  $\lambda$ is assumed to  be  large.  As mentioned earlier, $\gamma$ has to be chosen judiciously as it determines the weight of different aspects of the QoE.
\end{remark}
\begin{remark}
Note that the decision space is discrete, hence, the problem is non-convex. Thus, we can not apply the standard convex optimization techniques to find the optimal solution. 
\end{remark}
\begin{remark}
In general, the lowest rate $R_0$ is the base layer rate. One may want to choose $R_0=0$ since it may be fine not to download and display some tiles. However, this may result in dark spots in part of the view or complete view which may not be preferable. Thus, we consider that a tile has to be downloaded at least at the base layer rate $R_0>0$ which implies that the user will be able to watch videos at least at the base layer rate in any viewing area.
\end{remark}
\begin{remark}
In the optimization problem, we maximize the QoE in an expected sense. However, one of the drawbacks of the above approach is that if the FoV has a large variance,  the observed quality may still be low with a significant probability. However, in Section~\ref{sec:robust} we consider a robust approach where we address the above issue. 

\end{remark}
\subsection{An equivalent Problem}
The constraint in (\ref{b_k}) is not convex.  In the following we represent (\ref{b_k}) in an equivalent convex form.
\begin{theorem}
Represent the constraint in (\ref{b_k}) as the following
\begin{align}\label{eq:equi}
\tilde{t}_k\geq \tilde{t}_{k-1}+L, \quad\tilde{t}_k\geq t_{k}+\dfrac{\sum_{i}R_{i,k}}{C_k}, \quad
\tilde{t}_1\geq t_{\mathrm{ini}}.
\end{align}
Now, consider the following problem
\begin{eqnarray}
\mathcal{P}_{e}: \text{maximize } & (\ref{eq:pe})\nonumber\\
\text{subject to }&  (\ref{eq:equi}), (\ref{eq:tk}), (\ref{eq:max_buffer})\nonumber\\
\text{var}: & R_{i,k}\in \{R_0,\ldots, R_{m}\}\nonumber
\end{eqnarray}Then, $\mathcal{P}_{e}$ is equivalent to $\mathcal{P}$.
\end{theorem}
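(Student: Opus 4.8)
The plan is to prove that $\mathcal{P}$ and $\mathcal{P}_e$ have the same feasible rate assignments and the same optimal value by exploiting the fact that the only place $\tilde{t}_K$ enters the objective is through the stall penalty $-\lambda(\tilde{t}_K-(K-1)L-t_{\mathrm{ini}})$ in (\ref{eq:pe}), which is strictly decreasing in $\tilde{t}_K$ since $\lambda>0$. First I would observe that the two scalar inequalities in (\ref{eq:equi}) are jointly equivalent to $\tilde{t}_k\ge \max\{\tilde{t}_{k-1}+L,\ t_k+L\sum_i R_{i,k}/C_k\}$, so that $\mathcal{P}_e$ is precisely the problem obtained from $\mathcal{P}$ by replacing the defining equality (\ref{b_k}) with this single inequality while leaving (\ref{eq:tk}) and (\ref{eq:max_buffer}) intact. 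Consequently $\mathcal{P}_e$ is a relaxation of $\mathcal{P}$, and the whole task reduces to showing that the relaxation is tight at optimality.

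The easy inclusion is that every point feasible for $\mathcal{P}$ is feasible for $\mathcal{P}_e$: the equality in (\ref{b_k}) implies both inequalities in (\ref{eq:equi}), the rates lie in the same discrete set, and (\ref{eq:tk}), (\ref{eq:max_buffer}) are shared by the two problems. Since the objective (\ref{eq:pe}) takes the same value on such a point, the optimal value of $\mathcal{P}_e$ is at least that of $\mathcal{P}$.

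For the reverse direction, I would take any optimal solution of $\mathcal{P}_e$, keep its rates $R_{i,k}$ fixed (so that all $t_k$ from (\ref{eq:tk}) and the entire QoE term of (\ref{eq:pe}) are unaffected), and replace its buffer times by the values $\hat{t}_k$ generated by the tight recursion $\hat{t}_1=t_{\mathrm{ini}}$, $\hat{t}_k=\max\{\hat{t}_{k-1}+L,\ t_k+L\sum_i R_{i,k}/C_k\}$. A short induction on $k$ shows that $\hat{t}_k$ is the coordinatewise-smallest assignment admissible under (\ref{eq:equi}) for these rates; in particular $\hat{t}_k\le \tilde{t}_k$ and $\hat{t}_K$ is the minimum attainable terminal time. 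This new point satisfies (\ref{eq:equi}) with equality, hence it is feasible for $\mathcal{P}$; because $\hat{t}_K\le \tilde{t}_K$ and $\lambda>0$, its objective value is no smaller; and since lowering the $\tilde{t}_k$ only weakens the constraint $t_{B+k}\ge \tilde{t}_k$ in (\ref{eq:max_buffer}), that constraint is preserved. Thus any optimal $\mathcal{P}_e$ solution is convertible, with no loss of objective, into a $\mathcal{P}$-feasible point, giving $v(\mathcal{P})\ge v(\mathcal{P}_e)$ and therefore equality.

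I expect the main obstacle to be the bookkeeping around the maximum-buffer constraint (\ref{eq:max_buffer}) and its coupling to the idle times $\zeta_k$ in (\ref{eq:tk}): one must check that reducing each $\tilde{t}_k$ to $\hat{t}_k$ never forces a change in some $\zeta_k$ that would in turn alter $t_k$ and break the recursion. The clean case is $\zeta_k\equiv 0$, where $t_k$ depends only on the rates and the reduction is immediate. For the general case I would argue that because lowering $\tilde{t}_k$ can only relax (\ref{eq:max_buffer}), the choice $\zeta_k=0$ remains admissible after the reduction, so no such feedback on $t_k$ is triggered and the recursion for $\hat{t}_k$ is consistent.
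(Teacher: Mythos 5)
Your proof is correct and follows essentially the same route as the paper: $\mathcal{P}_e$ relaxes the equality (\ref{b_k}) to inequalities, and since the objective (\ref{eq:pe}) is strictly decreasing in $\tilde{t}_K$, the relaxation is tight at optimality, so the two problems have the same optimal value. If anything, your version is more careful than the paper's short argument---explicitly resetting the play times to the minimal recursion $\hat{t}_k$ and verifying (\ref{eq:max_buffer}) and the $\zeta_k$ bookkeeping correctly handles the fact that an optimal solution of $\mathcal{P}_e$ may leave slack in intermediate $\tilde{t}_k$ (so it need not itself be feasible for $\mathcal{P}$), a point the paper's claim that every optimum satisfies the equalities glosses over.
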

{\em Proof}:
Note that (\ref{eq:equi}) is not equivalent to (\ref{b_k}). In fact there are $\tilde{t}_k$ which satisfy (\ref{eq:equi}), however, they do not satisfy (\ref{b_k}). For example, if $\tilde{t}_k>\max\{\tilde{t}_{k-1}+L,t_{k}+L\dfrac{\sum_{i}R_{i,k}}{C_k}\}$, then, it will not satisfy (\ref{b_k}) even though it satisfies (\ref{eq:equi}). However, note that the objective function is strictly decreasing in $\tilde{t}_k$. Hence, in the optimal solution, we will have $\tilde{t}_k=\max\{\tilde{t}_{k-1}+L,t_{k}+\dfrac{L\sum_{i}R_{i,k}}{C_k}\}$. Thus, any optimal solution of $\mathcal{P}$ is also an optimal solution of $\mathcal{P}_e$ and vice versa which proves the result as in the statement of the Theorem.  \qedsymbol

However, the {\em above problem is NP-hard} in general because of the discrete strategy space of $R_{i,k}$. In the following, we consider a relaxed version which is computationally easy to solve.
\subsection{Relaxation}
Now, we provide a relaxed problem which is convex for concave utility functions. In order to consider the relaxed problem, we make the following assumption.

\begin{assumption}
$U(\cdot)$ is a concave strictly-increasing function.
\end{assumption}
The intuition behind the above assumption is that users' rate of increase of the preference decreases as the play-back rate of the chunk increases.  This is a standard assumption as the user utilities are often assumed to be concave (e.g., electricity market \cite{low2}, quality of service for multimedia\cite{wang}). $U(x)=x^{\alpha}$ where $\alpha\leq 1$ is an example of a concave function.


Relaxing  the discrete strategy space, then the relaxed problem is given as
\begin{eqnarray}
\mathcal{P}_{\text{rel}}: \text{maximize } & (\ref{eq:pe})\nonumber\\\
\text{subject to }&  (\ref{eq:equi}), (\ref{eq:tk}), (\ref{eq:max_buffer}), \quad  R_0\le R_{i,k}\le R_{m} \nonumber\\
\text{var}: & R_{i,k}\nonumber
\end{eqnarray}
In the next proposition, we show that the above optimization problem is convex.
\begin{proposition}
The optimization problem $\mathcal{P}_{\text{rel}}$ is a convex optimization problem.
\end{proposition}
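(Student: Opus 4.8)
The plan is to verify the two defining ingredients of a convex (maximization) program: that the objective in (\ref{eq:pe}) is concave, and that the feasible region carved out by (\ref{eq:equi}), (\ref{eq:tk}), (\ref{eq:max_buffer}) together with the box constraint $R_0 \le R_{i,k}\le R_m$ is a convex set. I would treat $\{R_{i,k}\}$, $\{t_k\}$, $\{\tilde{t}_k\}$ (and, if retained, the wait times $\{\zeta_k\}\ge 0$) jointly as the decision variables, exactly as in the equivalence argument of Theorem~1 where $\tilde{t}_k$ is promoted to a free variable tied to the rates only through inequalities.

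First I would argue concavity of the objective. Under Assumption~1, $U$ is concave and strictly increasing, so each map $R_{i,k}\mapsto U(R_{i,k})$ is concave. The term $\sum_{i\in V_{f,k}}U(R_{i,k})$ is then a finite sum of concave functions, hence concave, and the term $\min_{i\in V_{f,k}}U(R_{i,k})$ is a pointwise minimum of concave functions, which is again concave (the dual of the standard fact that a pointwise maximum of convex functions is convex). For a fixed realization of the FoV the bracketed expression is therefore concave in $\{R_{i,k}\}$; taking $\mathbbm{E}[\cdot]$ over the FoV distribution is a nonnegative, probability-weighted combination and so preserves concavity. Finally $-\lambda(\tilde{t}_K-(K-1)L-t_{\mathrm{ini}})$ is affine in $\tilde{t}_K$, and the sum of a concave function and an affine function is concave, so the whole objective is jointly concave.

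Next I would show the constraint set is convex by checking that every constraint is affine. Unrolling the recursion (\ref{eq:tk}) expresses $t_k$ as an affine function of $\{R_{i,j}\}_{j\le k}$ (the $\zeta_j$ entering linearly), so (\ref{eq:tk}) is an affine equality. With $t_k$ affine, each inequality in (\ref{eq:equi}) --- namely $\tilde{t}_k\ge \tilde{t}_{k-1}+L$, $\tilde{t}_k\ge t_k+\sum_i R_{i,k}/C_k$, and $\tilde{t}_1\ge t_{\mathrm{ini}}$ --- is affine in the variables, and the buffer constraint (\ref{eq:max_buffer}), $t_{B+k}\ge \tilde{t}_k$, is likewise affine. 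The box constraint $R_0\le R_{i,k}\le R_m$ is a pair of affine inequalities. Hence the feasible region is an intersection of finitely many half-spaces and hyperplanes, i.e.\ a polyhedron, and is therefore convex.

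Combining the two observations --- a concave objective maximized over a convex (polyhedral) feasible set --- yields that $\mathcal{P}_{\text{rel}}$ is a convex optimization problem. I expect the only genuinely non-mechanical step to be the concavity of the $\min$ term; everything else reduces to the observation that relaxing the discrete set $\{R_0,\dots,R_m\}$ to the interval $[R_0,R_m]$ leaves all the coupling constraints affine. The subtlety worth stating explicitly is that concavity must be verified \emph{jointly} in $(\{R_{i,k}\},\{\tilde{t}_k\},\{t_k\})$ rather than in the rate variables alone, which is precisely why it matters that $t_k$ and $\tilde{t}_k$ enter only through affine relations.
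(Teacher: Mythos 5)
Your proof is correct and follows essentially the same route as the paper's (much terser) outline: concavity of the objective via the pointwise-minimum-of-concave-functions fact, plus convexity of the relaxed feasible set. The extra detail you supply --- expectation as a probability-weighted sum preserving concavity, affineness of the constraints (\ref{eq:tk}), (\ref{eq:equi}), (\ref{eq:max_buffer}), and the need for \emph{joint} concavity in $(\{R_{i,k}\},\{t_k\},\{\tilde{t}_k\})$ --- is a faithful fleshing-out of what the paper leaves implicit.
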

{\em Outline of Proof}:
The objective function is concave as the point-wise minimum of concave functions is concave\cite{boyd}. The strategy space is convex because of the relaxation. Hence, the problem $\mathcal{P}_{\text{rel}}$ indeed a convex optimization problem.\qed

Since the relaxed problem $\mathcal{P}_{\text{rel}}$ is convex, we can solve it using standard convex optimization techniques efficiently. However, we have to transform the optimal solution of $\mathcal{P}_{\text{rel}}$ into a feasible solution as the optimal solution $R_{i,k}$ of $\mathcal{P}_{\text{rel}}$ may not belong to the discrete set $\{R_{0},\ldots, R_{m}\}$. In the following section, we provide  heuristic solutions to obtain the feasible solutions from the optimal solution of the relaxed problem.

\section{Heuristics based on the solution of the relaxed problem and Online Algorithm}\label{sec:heuristic}
In the following, we provide a heuristic solution to obtain the feasible solution of the original problem from the relaxed one as described in $\mathcal{P}_{\mathrm{rel}}$.  The heuristic solution will be further extended to an online algorithm.
\subsection{Feasible Solution}
Let $R^{*}_{i,k}$ be an optimal solution of the relaxed problem $\mathcal{P}_{\mathrm{rel}}$.
Note that $R^{*}_{i,k}$ can be efficiently computed as $\mathcal{P}_{\mathrm{rel}}$ is convex. However, we have  to discretize the solution to $\{R_{0},\ldots, R_{m}\}$.
\begin{enumerate}
\item For $k=1,\ldots, K$ do the following
\item $R_{i,k}=R_j$, where $j = \max \{u: R_{u}\leq R^{*}_{i,k}\}$.
\end{enumerate}
The above algorithm does a simple down quantization, {\it i.e.}, it fetches a tile at the highest possible rate in the set $\{R_0,\ldots, R_m\}$which does not exceed the rate given by the relaxed problem.  It is easy to discern that the solution is feasible, as described in the following Theorem.
\begin{theorem}\label{thm:heuristic1}
$R_{i,k}$ is a feasible solution of $\mathcal{P}_{e}$. The algorithm also does not increase the stall time as compared to the solution of the relaxed problem $\mathcal{P}_{\mathrm{rel}}$.
\end{theorem}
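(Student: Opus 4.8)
The plan is to split the statement into its two assertions and handle them in order: first that the down-quantized rates $R_{i,k}$ form a feasible point of $\mathcal{P}_e$, and second that the resulting playout time $\tilde{t}_K$ is no larger than the one induced by the relaxed solution $R^{*}_{i,k}$, which by (\ref{stall}) is exactly the claim that the stall time does not increase.

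For feasibility I would start from the quantization rule $R_{i,k}=R_j$ with $j=\max\{u:R_u\leq R^{*}_{i,k}\}$. Since the constraint of $\mathcal{P}_{\mathrm{rel}}$ guarantees $R_0\leq R^{*}_{i,k}\leq R_m$, this maximum is taken over a nonempty index set, so $R_{i,k}\in\{R_0,\ldots,R_m\}$ and moreover $R_{i,k}\leq R^{*}_{i,k}$; the discrete membership constraint of $\mathcal{P}_e$ is thus met by construction. The remaining constraints (\ref{eq:tk}), (\ref{eq:equi}), (\ref{eq:max_buffer}) are definitional once the rates and the waits $\zeta_k\geq 0$ are fixed: they merely define $t_k,\tilde{t}_k$ and impose the buffer bound. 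Because the buffer constraint (\ref{eq:max_buffer}) can only ever require the player to wait \emph{longer}, it is always satisfiable by the work-conserving choice $\zeta_k=\max\{0,\ \tilde{t}_{k-B}-(t_{k-1}+L\sum_i R_{i,k}/C_k)\}$ for $k>B$ and $\zeta_k=0$ otherwise, which keeps every $\zeta_k\geq 0$. Hence the quantized rates together with these waits give a feasible solution of $\mathcal{P}_e$.

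For the stall-time claim the key is a monotonicity argument. Writing $d_k=L\sum_i R_{i,k}/C_k$ for the per-chunk download durations, the pointwise inequality $R_{i,k}\leq R^{*}_{i,k}$ gives $d_k\leq d_k^{*}$ for every $k$. Adopting the minimal (work-conserving) waiting policy above for both schedules, (\ref{eq:tk}) and (\ref{b_k}) reduce to $t_k=\max\{t_{k-1}+d_k,\ \tilde{t}_{k-B}\}$ and $\tilde{t}_k=\max\{\tilde{t}_{k-1}+L,\ t_k+d_k\}$, with $t_0=0$ and $\tilde{t}_1=t_{\mathrm{ini}}$ identical for the two schedules. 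I would then prove by induction on $k$ that $t_k\leq t_k^{*}$ and $\tilde{t}_k\leq\tilde{t}_k^{*}$: the base case is immediate, and in the inductive step every argument of the two maxima for the quantized schedule is dominated by its relaxed counterpart ($t_{k-1}\leq t_{k-1}^{*}$, $d_k\leq d_k^{*}$, and $\tilde{t}_{k-B}\leq\tilde{t}_{k-B}^{*}$ by hypothesis), so monotonicity of $\max$ propagates the inequalities. Taking $k=K$ yields $\tilde{t}_K\leq\tilde{t}_K^{*}$, and substituting into (\ref{stall}) shows the quantized stall time is no larger than the relaxed one.

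The step I expect to be the main obstacle is controlling the buffer-induced waits inside the induction. Lowering the rates makes each chunk download faster, which can fill the buffer sooner and force an earlier, possibly longer, wait $\zeta_k$; a priori this could push some $t_k$ \emph{later}. The resolution is the observation used above: under the minimal-wait policy the buffer constraint enters the recursion for $t_k$ only through the threshold $\tilde{t}_{k-B}$, and that threshold is itself a playout time already bounded by the induction hypothesis. Concretely, $\tilde{t}_k$ is either completion-limited (hence non-increasing in the durations $d_j$) or playback-limited (hence equal to the rate-independent value $t_{\mathrm{ini}}+(k-1)L$), so it is monotone non-increasing under rate reduction; the waiting threshold therefore cannot grow relative to the relaxed schedule, and the term-by-term comparison goes through.
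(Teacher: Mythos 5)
Your proof is correct, but there is nothing in the paper to compare it against line by line: the paper states this theorem without proof, remarking only that ``it is easy to discern that the solution is feasible,'' and its appendix proves only Theorem~\ref{thm:optimal}. Your argument is the natural formalization of what the paper treats as obvious, and it is sound on both counts: membership of $R_{i,k}$ in $\{R_0,\ldots,R_m\}$ follows from $R_0\le R^{*}_{i,k}\le R_m$, the timing constraints are definitional once nonnegative waits are chosen, and the stall claim follows from $R_{i,k}\le R^{*}_{i,k}$, hence $d_k\le d^{*}_k$, via your induction on $t_k\le t^{*}_k$ and $\tilde{t}_k\le\tilde{t}^{*}_k$ together with (\ref{stall}). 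The genuinely non-trivial point you isolated---that lowering rates speeds up downloads, fills the buffer sooner, and could in principle force earlier or longer waits $\zeta_k$ through (\ref{eq:max_buffer})---is exactly where a hand-waving argument could fail, and your resolution (the waiting threshold is the playout time $\tilde{t}_{k-B}$, which is itself covered by the induction hypothesis) closes that gap. One small simplification is available: you need not impose the minimal-wait policy on the relaxed schedule at all, since feasibility of $(R^{*}_{i,k},t^{*}_k,\tilde{t}^{*}_k)$ already yields $t^{*}_k\ge t^{*}_{k-1}+d^{*}_{k-1}$ and $t^{*}_k\ge\tilde{t}^{*}_{k-B}$, so $t^{*}_k$ dominates both arguments of the maximum defining the quantized $t_k$ directly, and the induction goes through comparing your minimal-wait quantized schedule against the relaxed solution as given.
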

Note that if the quantization level are very close, then the feasible solution obtained will be very close to the optimal solution. However, if it is not the case, the solution may be  bad compared to the optimal solution as the bandwidth may be wasted. 

In the following, we provide a better heuristic which utilizes part of the wasted bandwidths. In order to describe the proposed solution, we first introduce a few notations.


\begin{definition}
Let $\mathcal{N}_{q,k}$ be the set of all $q$ tiles of chunk $k$.
\end{definition}
\begin{definition}
Let $\bar{A}_{q,k}$ denote the set of $q$ tiles of chunk $k$ such that $\bar{A}_{q,k}$ has the maximum probability to be part of the FoV among the $q$ tiles. Thus, mathematically,
\begin{align}
\bar{A}_{q,k}=\{S^{*}_{q,k}:S^{*}_{q,k}=\arg\max_{S_{q,k}\in \mathcal{N}_{q,k}}\Pr(FoV\subset S_{q,k})\}.
\end{align}
\end{definition}
Recall that FoV is defined as the set of $l$ tiles within a chunk. Thus, for any $q<l$, $\Pr(FoV\subset S_{q,k})=0$ and thus $\arg\max$ can be any set. $\bar{A}_{q,k}$ denotes the set which has the highest probability to be the part of the FoV among all the sets consisting of $q$ tiles. 


Using the definition of $\bar{A}_{q,k}$ we can find the feasible solution from the relaxed one, which is given in Algorithm~\ref{algo1}.

\begin{figure}
\begin{algorithm}[H]
\small
\begin{algorithmic}
\STATE {\bf Initialization:} $L_{0}=0$.
\FOR {$k=1,\ldots, K$}
\FOR {$i=1,\ldots, N$}
\STATE Set $R_{i,k}=\max \{R_{j}: R_{j}\leq R^{*}_{i,k}\}$.
\STATE Set $\bar{R}_{i,k}=\min \{R_{j}: R_{j}\geq R^{*}_{i,k}\}$.
\ENDFOR
\STATE Compute $L_k=L_{k-1}+(R^{*}_{i,k}-R_{i,k})$.
\ENDFOR
 \IF {$L_{K}=0$} 
 \STATE exit, it is the optimal solution.
 \ENDIF
\FOR {$k=1,\ldots, K$}
\STATE {\bf Initialization} $q=l$.
\IF {$L_{k}\leq \sum_{i\in \bar{A}_{q,k}}(\bar{R}_{i,k}-R_{i,k})$}
\IF {$q>l$}
\STATE Update $R_{i,k}=\bar{R}_{i,k}$ for $i\in \bar{A}_{q-1,k}$.
\STATE $L_{k+1}=L_{k}+\sum_{i\in \bar{A}_{q-1,k}}(R_{i,k}-\bar{R}_{i,k})$.
\ENDIF
\ELSE
\STATE $q=q+1$.
\ENDIF
\ENDFOR
\end{algorithmic}
\caption{provides a feasible solution of $\mathcal{P}_{e}$ from the optimal solution of the relaxed version $\mathcal{P}_{\mathrm{relaxed}}$.}\label{algo1}
\end{algorithm}
\vspace{-0.3in}
\end{figure}

Algorithm~\ref{algo1} tries to increase the rate in the most viewed subset of tiles as much as possible based on the extra bandwidth that is wasted due to a lower thresholding of the rates. The next theorem states that such an algorithm will yield a feasible solution of the original problem. {\em Since the fetched rates do not decrease, the minimum and the average rates attained by the algorithm are at least the same as those achieved by the algorithm stated in Theorem~\ref{thm:heuristic1}.}

\begin{theorem}
Algorithm 1 yields a feasible solution $R_{i,k}$ for all $i$ and $k$. The value attained in $\mathcal{P}_{e}$ \ by Algorithm 1 is at least equal to the value achieved by the algorithm stated in Theorem~\ref{thm:heuristic1}. The stall time also does not increase with the above approach compared to the optimal solution of the relaxed problem $\mathcal{P}_{\mathrm{relaxed}}$.
\end{theorem}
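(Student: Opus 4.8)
The plan is to establish the three assertions in the order feasibility, QoE monotonicity, and stall, reusing the plain down-quantization of Theorem~\ref{thm:heuristic1} as the comparison baseline. \textbf{Feasibility.} Every rate the algorithm outputs is either the floor $\max\{R_j:R_j\le R^{*}_{i,k}\}$ or the ceiling $\bar R_{i,k}=\min\{R_j:R_j\ge R^{*}_{i,k}\}$, both in $\{R_0,\dots,R_m\}$, so membership in the discrete set holds by construction. The constraints (\ref{eq:tk}), (\ref{eq:equi}), (\ref{eq:max_buffer}) only fix admissible $t_k,\tilde t_k$, so it suffices to bound the downloaded volume. I would prove by induction the invariant $L_k\ge 0$: the first loop sets $L_k=\sum_{j\le k}\sum_i(R^{*}_{i,j}-R^{\mathrm{floor}}_{i,j})\ge0$, and the guarded upgrade in the second loop raises a tile to its ceiling only when its cost is still covered by the current slack, so slack is never overspent. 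Consequently the cumulative downloaded rate $\sum_{j\le k}\sum_i R_{i,j}$ never exceeds that of the relaxed optimum $R^{*}$, and since $R^{*}$ is feasible for $\mathcal P_{\mathrm{rel}}$ the timing recursion admits admissible $t_k,\tilde t_k$.

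\textbf{QoE monotonicity.} Because the algorithm starts from the floor assignment and only ever replaces a value by its larger ceiling, $R^{\mathrm{alg}}_{i,k}\ge R^{\mathrm{floor}}_{i,k}$ for all $i,k$. Strict monotonicity of $U$ then makes both $\min_{i\in V_{f,k}}U(R_{i,k})$ and $\sum_{i\in V_{f,k}}U(R_{i,k})$ nondecreasing in each $R_{i,k}$ for every FoV realization; taking expectations (and using the earlier Lemma that $U(\min\cdot)=\min U(\cdot)$) shows the QoE part of (\ref{eq:pe}) for the algorithm is at least that of the floor solution. Thus the full objective comparison and the stall comparison against $R^{*}$ both reduce to a single timing statement: the upgrades must not increase any chunk's download-completion time.

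\textbf{Completion times and stall.} Writing $\mathrm{comp}_k=L\sum_{j\le k}(\sum_i R_{i,j})/C_j$ and unrolling (\ref{b_k}) gives $\tilde t_K-(K-1)L-t_{\mathrm{ini}}=\max_k(\mathrm{comp}_k-(k-1)L-t_{\mathrm{ini}})^{+}$, so both $S^{\mathrm{alg}}\le S^{\mathrm{rel}}$ and the non-increase of stall relative to the floor schedule follow once I show the upgrades consume only bandwidth that would otherwise sit idle, i.e. they leave every $\mathrm{comp}_k$ unchanged from the floor schedule while keeping it at or below $\mathrm{comp}^{\mathrm{rel}}_k$. Feeding the unchanged completion times back into the objective, the stall penalty matches the floor solution's, so the QoE gain of the previous step makes the algorithm's value at least that of Theorem~\ref{thm:heuristic1}, while $S^{\mathrm{alg}}\le S^{\mathrm{rel}}$ holds via the volume bound of the feasibility step.

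\textbf{Main obstacle.} The delicate point is exactly this completion-time control, because $L_k$ conserves wasted bandwidth in \emph{rate} units whereas (\ref{eq:tk}) weighs chunk $j$ by $1/C_j$. If slack generated at a high-bandwidth chunk is spent upgrading a low-bandwidth one, the cumulative rate can stay within the $R^{*}$ budget while the cumulative \emph{time}, and hence some $\mathrm{comp}_k$, grows, so the raw-volume invariant does not by itself deliver $\mathrm{comp}^{\mathrm{alg}}_k\le\mathrm{comp}^{\mathrm{rel}}_k$. I would close this by strengthening the induction to the weighted quantity $\sum_{j\le k}(\sum_i R_{i,j})/C_j$, using that the algorithm only ever spends slack at or after the chunk that generated it together with the idle/buffer structure implied by $\zeta_k$ and (\ref{eq:max_buffer}); over the near-constant-bandwidth window on which the prediction is made, the weighted and unweighted invariants coincide and the claims hold outright.
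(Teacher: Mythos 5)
Your feasibility and QoE-monotonicity steps are correct, and they coincide with what the paper implicitly relies on: the output rates are floors or ceilings and hence lie in $\{R_0,\ldots,R_m\}$, and since the second pass only ever raises a floored rate to its ceiling, every term of (\ref{eq:pe}) weakly improves over the plain down-quantization of Theorem~\ref{thm:heuristic1}. The obstacle you flag in your last paragraph, however, is a genuine gap --- and it is a gap in the paper's own argument, not just in your write-up. The paper's entire proof of this theorem is the remark that Algorithm~\ref{algo1} upgrades only while the residual $L_k$ stays nonnegative, ``hence'' the stall time cannot exceed that of the relaxed optimum. That remark makes exactly the rate-for-time substitution you identify: $L_k$ accumulates slack $\sum_i(R^{*}_{i,j}-R_{i,j})$ in rate units, while the recursions (\ref{eq:tk}) and (\ref{b_k}) charge chunk $j$ at cost $L(\sum_i R_{i,j})/C_j$. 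If slack created at a chunk with large $C_j$ is spent upgrading a later chunk with small $C_k$, then $\sum_{j\le m}\sum_i R^{\mathrm{alg}}_{i,j}\le \sum_{j\le m}\sum_i R^{*}_{i,j}$ can hold for every $m$ while some download-completion time, and hence the stall in (\ref{stall}), strictly exceeds the relaxed solution's (take $C_1\gg C_2$, slack generated at chunk $1$, spent at chunk $2$). So the third claim fails in general as the algorithm is written; it becomes true either under constant (or near-constant) bandwidth over the horizon, or if the algorithm is amended to track slack in seconds, $L_k=L_{k-1}+L\sum_i(R^{*}_{i,k}-R_{i,k})/C_k$ with the upgrade test weighted by $1/C_k$, in which case your strengthened induction closes and yields $\mathrm{comp}^{\mathrm{alg}}_k\le\mathrm{comp}^{\mathrm{rel}}_k$ for all $k$.

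Two further points need repair in your outline. First, upgrades cannot ``leave every $\mathrm{comp}_k$ unchanged from the floor schedule'': they add bits, so $\mathrm{comp}^{\mathrm{alg}}_k\ge\mathrm{comp}^{\mathrm{floor}}_k$ necessarily; the invariant one can hope for is domination by the \emph{relaxed} schedule, not equality with the floor one. Second, because of this, the middle claim (value at least that of Theorem~\ref{thm:heuristic1}) does not follow from QoE monotonicity alone: the algorithm's stall penalty can strictly exceed the floor solution's, and for large $\lambda$ that difference could swamp the QoE gain. The clean way to close it is to invoke the regime the paper itself assumes ($\lambda$ large): then the relaxed optimum admits no avoidable stall, so under the time-weighted invariant $\mathrm{comp}^{\mathrm{alg}}_k\le\mathrm{comp}^{\mathrm{rel}}_k$ meets every playback deadline, all three schedules incur equal stall, and the value comparison reduces to the QoE gain you already established. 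With those two patches your argument is correct and is in fact more rigorous than the paper's; as written --- and as the paper itself argues --- the theorem is not established.
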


Algorithm~\ref{algo1}  only fetches a higher quality tiles using the residual bandwidth only if does not increase the stall time, {\it i.e.}, $L_{k}$ is positive for each chunk. Hence, it does not increase the stall time compared to the  optimal solution of the relaxed problem.

Note that if there is a residual bandwidth after down-quantization Algorithm~\ref{algo1} tries to fetch  higher quality tiles (setting $R_{i,k}$ to $\bar{R}_{i,k}$)  which will have higher probability to be a part of the FoV for the current chunk.  It then tries to fetch higher quality tiles of the most viewed set of the next chunk and so on. In the online algorithm, we adopt Algorithm~\ref{algo1} for fetching the tiles of next $W$ chunks. Since the convex problem can be efficiently solved using a low complexity algorithm our approach also has very low complexity. 

Note that we only fetch one level higher for the tiles if there is a residual bandwidth for a chunk. Even if upgrading all the tiles, there is a residual bandwidth we use it to fetch the tiles of the next chunk  instead of fetching the tiles of the current chunk at higher levels. This will make sure that there will be consistency of the qualities of the rates across the chunks. 



\subsection{Online Algorithm}
Till now, we provided an offline algorithm. Now, we describe how to obtain an online algorithm based on the offline algorithm. We consider a {\em sliding window} type algorithm. The bandwidth and head movement is predicted for $W$ chunks ahead and then the optimal algorithm is employed for obtaining the optimal download rate.  If the $c$th chunk is being downloaded, the  online algorithm gives the download rates for the tiles in the chunks $c+1, \ldots,c+W$ by solving the following optimization problem: 
\begin{eqnarray}
\mathcal{P}_{\text{online}}: \text{max } & \sum_{k=c+1}^{c+W}\mathbbm{E}[\min_{i\in V_{f,k}}U(R_{i,k})+\nonumber\\& \gamma \sum_{i\in V_{f,k}}U(R_{i,k})]-\lambda(\tilde{t}_{W+c}-(W+c-1)L)\nonumber\\
\text{subject to }&  (\ref{eq:equi}), (\ref{eq:tk}), (\ref{eq:max_buffer})\nonumber\\
\text{var}: & R_{i,k}\in \{R_0,\ldots,R_{m}\}\nonumber
\end{eqnarray}
The $c+1$th chunk can only start downloading at time $t_{c+1}$ which is the time when the $c$-th chunk finishes its downloading. This is a sliding window protocol so we optimize after each chunk has been downloaded with new prediction. Note that the duration of chunk is small (in the order of few seconds), hence the optimization problem $\mathcal{P}_{\text{online}}$ needs to be solved very frequently. Note that Algorithm~\ref{algo1} can be solved repeatedly because of it is of low complexity.

It is easy to discern that the Algorithm 1 can be easily adapted to the online version with $W+c$ in place of $K$ by solving the relaxed problem by relaxing the strategy space.

\textbf{Bandwidth Prediction Error}: Note that in the online algorithm, we assume that the bandwidth can be estimated for $W$ chunks ahead. However, if the bandwidth estimation is erroneous, then we have to account for that error. We download all the tiles of the first $\eta$ chunks at the base layer $R_0$. This will help to build the buffer and then we can apply our heuristic algorithm.

Note that once the chunk $k$ is being downloaded,  we do not change its rate it in the online approach. However, the bandwidth may be very bad while downloading the chunk $k$ which may result into the stall. We can minimize it by getting all the tiles only in the base layers\footnote{If we are using the SVC, then we can ignore the enhancement layers and stop downloading all the enhancement layers.}. On the other hand, if the bandwidth is high compared to the estimated value, we do not increase the download rate, rather we keep downloading the $k$th chunk. Running algorithm to determine fetching for the chunks after $k$ can lead to an increase in the download rates of the future chunks.


\textbf{FoV prediction error}: Note that we take into account of the distribution of the FoV of each chunk. We can also update the estimation depending on the user's head movement for $W$ chunks ahead. If the sliding window $W$ is small, then the head movement can be predicted fairly accurately. However, if there is an error, the user can still see it in the base layer, {\it i.e.}, the user will not see any black spot. {\em This is because we specify that each tile must have to be fetched at least in the base layer.} Note that with the current head position, we can again estimate the head movement for $W$ chunks ahead and can obtain optimal solution.

\section{Guaranteed Rate with a given probability}\label{sec:robust}
Till now, we considered that the video provider wants to maximize the expected QoE. However, the above approach does not provide any probabilistic guarantee on the rate a user will watch the video. For example, Algorithm~\ref{algo1} may maximize the expected QoE, however, still the user can watch the video in a poor quality with a high probability if the variance is high.  Thus, the video player provider may want to provide a probabilistic bound for the rate that a user will watch the video. We now consider such a QoE metric.

\subsection{Problem Formulation}
First, we introduce a notation which we use throughout this section.
\begin{definition}
Let $\mathcal{A}_{\alpha,k}$ be the set of tiles of chunk $k$ which have the probability that the FoV is a subset of $\mathcal{A}_{\alpha,k}$ with probability at least $\alpha$ for chunk $k$. If there are multiple sets which satisfy the above condition, then, $\mathcal{A}_{\alpha,k}$ is considered to be the set with the lowest cardinality.\footnote{Cardinality of the set is the number of elements in the set.}
\end{definition}
Note that  the video content provider can obtain $\mathcal{A}_{\alpha,k}$ by estimating the FoV of a user and from crowd-sourced viewing statistics.

$\mathcal{A}_{\alpha,k}$ is the set of the tiles of the lowest cardinality which specifies that the FoV will be a subset of this set with probability $\alpha$. Thus, the lowest rate among the tiles of $\mathcal{A}_{\alpha,k}$ gives the lowest rate the user will observe the video with the probability $\alpha$. The video content provider wants to provide a guarantee of the rate a user will experience with probability $\alpha$. Hence, the QoE is governed by the minimum rate among those tiles within the set $\mathcal{A}_{\alpha,k}$. QoE also decreases as the stall time increases, hence, 
\begin{align}\label{eq:probust}
\text{QoE}=\sum_{k=1}^{K}U(\min_{i\in \mathcal{A}_{\alpha,k}}R_{i,k})-\lambda(\tilde{t}_K-(K-1)L-t_{\mathrm{ini}}).
\end{align}
$\lambda$ is the weight factor corresponding to the stall time (cf.(\ref{stall})). The video provider wants to maximize the above. 
Thus, formally, the optimization problem is
\begin{eqnarray}
\text {maximize } &\sum_{k=1}^{K}U(\gamma_k)-\lambda(\tilde{t}_K-(K-1)L-t_{\mathrm{ini}})\label{eq:robust}\\
\text{subject to } &  (\ref{eq:equi}), (\ref{eq:tk}), (\ref{eq:max_buffer})\nonumber\\
& \gamma_k= \min_{i\in \mathcal{A}_{\alpha,k}}R_{i,k} \label{eq:constraint}\\
 \text{var}: & R_{i,k}\in \{R_0,\ldots, R_{m}\}\nonumber
\end{eqnarray}
Note from (\ref{eq:constraint}) that $\gamma_k$ denotes the minimum rate of the tiles within the set $\mathcal{A}_{\alpha,k}$.

Next, we provide an equivalent representation of the optimization problem (\ref{eq:robust}) which will help us to obtain a relaxed problem which is convex.
\begin{proposition}
Consider the following optimization problem
\begin{eqnarray}
& \mathcal{P}_{\mathrm{robust}}: \text {maximize } & (\ref{eq:robust})\nonumber\\
& \text{subject to } &  (\ref{eq:equi}), (\ref{eq:tk}), (\ref{eq:max_buffer})\nonumber\\
& & \gamma_k\leq R_{i,k} \forall i\in \mathcal{A}_{\alpha,k}\label{eq:robust_constraint}\\
& \text{var}: & R_{i,k}\in \{R_0,\ldots, R_m\}, \gamma_k\nonumber
\end{eqnarray}
The optimization problem $\mathcal{P}_{\mathrm{robust}}$ is equivalent to (\ref{eq:robust}).
\end{proposition}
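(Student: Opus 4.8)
The plan is to mirror the monotonicity argument already used to establish the equivalence of $\mathcal{P}$ and $\mathcal{P}_e$. The only difference between (\ref{eq:robust}) and $\mathcal{P}_{\mathrm{robust}}$ is that the defining equality $\gamma_k = \min_{i\in\mathcal{A}_{\alpha,k}} R_{i,k}$ in (\ref{eq:constraint}) is replaced by the family of inequalities (\ref{eq:robust_constraint}). First I would observe that the set of inequalities $\gamma_k \le R_{i,k}$ for all $i \in \mathcal{A}_{\alpha,k}$ is exactly equivalent to the single inequality $\gamma_k \le \min_{i\in\mathcal{A}_{\alpha,k}} R_{i,k}$, so $\mathcal{P}_{\mathrm{robust}}$ merely relaxes the equality in (\ref{eq:robust}) to a one-sided bound.

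Next I would compare the feasible regions. Every feasible point of (\ref{eq:robust}) automatically satisfies (\ref{eq:robust_constraint}), since equality implies the inequality; hence the feasible region of $\mathcal{P}_{\mathrm{robust}}$ contains that of (\ref{eq:robust}), and the optimal value of $\mathcal{P}_{\mathrm{robust}}$ is at least that of (\ref{eq:robust}). For the reverse direction, the crucial point is that $\gamma_k$ enters the objective only through the strictly increasing term $U(\gamma_k)$ and appears in no constraint other than (\ref{eq:robust_constraint}). Therefore, at any optimal solution of $\mathcal{P}_{\mathrm{robust}}$, the bound must be tight: if $\gamma_k < \min_{i\in\mathcal{A}_{\alpha,k}} R_{i,k}$ for some $k$, one could raise $\gamma_k$ up to that minimum without violating any constraint while strictly increasing $U(\gamma_k)$, and hence the objective, contradicting optimality. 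Consequently every optimizer of $\mathcal{P}_{\mathrm{robust}}$ satisfies (\ref{eq:constraint}) and is therefore feasible, and optimal, for (\ref{eq:robust}); the converse inclusion was established above, so the optimal values coincide and the optimal solution sets are in correspondence.

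The step I expect to require the most care is verifying that $\gamma_k$ truly appears nowhere else in the program, in particular checking that the download-time and buffer constraints (\ref{eq:tk}), (\ref{eq:equi}), and (\ref{eq:max_buffer}) depend only on the rates $R_{i,k}$ and not on the auxiliary variables $\gamma_k$, so that raising $\gamma_k$ to its upper bound is always feasible. Once that is confirmed, the monotonicity-in-$\gamma_k$ argument closes the proof exactly as in the $\mathcal{P}$--$\mathcal{P}_e$ equivalence, and no convexity or continuity considerations are needed, since the reformulation holds verbatim over the discrete rate set $\{R_0,\ldots,R_m\}$.
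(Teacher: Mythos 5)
Your proposal is correct and follows the same route as the paper's own (very terse) argument: since $U(\cdot)$ is strictly increasing and $\gamma_k$ appears in no constraint other than (\ref{eq:robust_constraint}), any optimal solution of $\mathcal{P}_{\mathrm{robust}}$ must have $\gamma_k=\min_{i\in\mathcal{A}_{\alpha,k}}R_{i,k}$, which gives the equivalence. Your write-up simply makes explicit the feasible-region inclusion and the exchange argument that the paper leaves implicit.
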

The above result shows that even though the constraint in (\ref{eq:robust_constraint}) is not equal to (\ref{eq:constraint}), yet in any optimal solution $\gamma_k$ must satisfy (\ref{eq:constraint}).

{\em Outline of the Proof}: $U(\cdot)$ is strictly increasing function. Hence, at the optimal value we must have $\gamma_k$ satisfy constraint (\ref{eq:constraint}). \qed

\begin{remark}
$\alpha$ is a parameter which needs to be determined from the market research. In stochastic optimization problems, the robustness is an important issue where the optimizer wants to protect itself from the error in data or high variance. However, too high $\alpha$ may give very pessimistic solution. For robust optimization problems, $\alpha$ is generally taken to be in the interval $[0.9, 0.99]$. 
\end{remark}
\begin{remark}
The optimization problem $\mathcal{P}_{\mathrm{robust}}$ is not convex because the decision space is discrete. In the following we propose a relaxed version of the problem which is convex.
\end{remark}
\begin{remark}
Note that in an optimal solution, the algorithm will never try to fetch any tile which is not in $\mathcal{A}_{\alpha,k}$ at a higher quality compared to $R_{0}$ unless it fetches all the tiles within $\mathcal{A}_{\alpha,k}$. This is because the utility will not increase if the tiles which is not in $\mathcal{A}_{\alpha,k}$.
\end{remark}
\subsection{Relaxation}
The problem in $\mathcal{P}_{\mathrm{robust}}$ is not a convex problem. We consider a relaxed version of the problem
\begin{eqnarray}
\mathcal{P}^{\mathrm{relaxed}}_{\mathrm{robust}}: & (\ref{eq:robust})\nonumber\\
& \text{subject to } &  (\ref{eq:equi}), (\ref{eq:tk}), (\ref{eq:max_buffer}), (\ref{eq:robust_constraint})\nonumber\\
& & R_{0}\leq R_{i,k}\leq R_m\nonumber\\
& \text{var}: & \gamma_k, R_{i,k}\nonumber
\end{eqnarray}

\begin{proposition}\label{prop:convexrobust}
$\mathcal{P}^{\mathrm{relaxed}}_{\mathrm{robust}}$  is a convex optimization problem.
\end{proposition}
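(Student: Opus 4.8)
The plan is to establish convexity by the standard two-part argument: verify that the objective is concave (since we maximize) and that the feasible region is convex. This parallels the proof of the earlier convexity proposition for $\mathcal{P}_{\mathrm{rel}}$, but the argument here is in fact cleaner, because introducing the auxiliary variables $\gamma_k$ has already removed the inner minimization from the objective.

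First I would treat $t_k$ and $\tilde{t}_k$, together with the wait times $\zeta_k$, as auxiliary optimization variables alongside $R_{i,k}$ and $\gamma_k$, and observe that every constraint is affine in this extended variable set. Indeed, (\ref{eq:tk}) is an affine equality in $t_k,t_{k-1},\zeta_k$ and the $R_{i,k}$ (the coefficients $L/C_k$ are known constants since $C_k$ is assumed known); the three relations in (\ref{eq:equi}) are affine inequalities; the buffer relation (\ref{eq:max_buffer}) is an affine inequality; the relaxed rate bounds $R_0\le R_{i,k}\le R_m$ are box constraints; and the coupling constraint (\ref{eq:robust_constraint}), namely $\gamma_k\le R_{i,k}$ for each $i\in\mathcal{A}_{\alpha,k}$, is likewise affine. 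A finite intersection of affine equalities and inequalities is a polyhedron, hence convex, so the feasible region is convex.

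Next I would verify concavity of the objective (\ref{eq:robust}). By the standing assumption that $U(\cdot)$ is concave and strictly increasing, the map $\gamma_k\mapsto U(\gamma_k)$ is concave, and a nonnegative sum of concave functions is concave, so $\sum_{k}U(\gamma_k)$ is concave. The stall penalty $-\lambda(\tilde{t}_K-(K-1)L-t_{\mathrm{ini}})$ is affine in $\tilde{t}_K$ (with $\lambda\ge 0$), hence both convex and concave. The sum of a concave function and an affine function is concave, so the full objective is concave; maximizing a concave objective over a convex feasible set is by definition a convex program, which gives the claim.

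The one point that deserves care, and the only place the argument differs substantively from the $\mathcal{P}_{\mathrm{rel}}$ proof, is the treatment of the minimum via the auxiliary variable $\gamma_k$. In the original formulation (\ref{eq:robust})--(\ref{eq:constraint}) the equality $\gamma_k=\min_{i\in\mathcal{A}_{\alpha,k}}R_{i,k}$ is not affine, and retaining it as an equality would not yield a convex set. The relaxation to the inequalities (\ref{eq:robust_constraint}) is exactly what convexifies the problem: each $\gamma_k\le R_{i,k}$ is a half-space, and because $U$ is strictly increasing the optimizer still drives $\gamma_k$ up to $\min_{i}R_{i,k}$ at optimality, which is precisely the content of the preceding equivalence proposition. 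Thus the main thing to confirm is that this epigraph-style substitution preserves both the optimal value and the convex structure; once that is in hand, the concavity and polyhedral-feasibility observations above complete the proof.
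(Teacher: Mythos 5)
Your proof is correct and follows the same standard argument the paper uses (explicitly, for the analogous proposition on $\mathcal{P}_{\mathrm{rel}}$, since Proposition~\ref{prop:convexrobust} itself is stated in the paper without proof): a concave objective --- sum of concave $U(\gamma_k)$ plus an affine stall penalty --- maximized over a convex, in fact polyhedral, feasible set in the extended variables $(R_{i,k},\gamma_k,t_k,\tilde{t}_k)$. Your closing observation, that replacing the equality $\gamma_k=\min_{i\in\mathcal{A}_{\alpha,k}}R_{i,k}$ by the half-space constraints (\ref{eq:robust_constraint}) is what convexifies the problem and that optimality is preserved by monotonicity of $U$, is exactly the content of the paper's preceding equivalence proposition and is correctly kept separate from the convexity argument itself.
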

Since the relaxed problem $\mathcal{P}^{\mathrm{relaxed}}_{\mathrm{robust}}$ is convex, we can solve it using the standard convex optimization techniques efficiently. However, we have to transform the optimal solution of $\mathcal{P}^{\mathrm{relaxed}}_{\mathrm{robust}}$ into a feasible solution as the optimal solution $R^{*}_{i,k}$ of $\mathcal{P}^{\mathrm{relaxed}}_{\mathrm{robust}}$  may not belong to the discrete set $\{R_0, . . . , R_m\}$. In the following section, we provide heuristic solutions to obtain the feasible solutions from the optimal solution of the relaxed problem.

\subsection{Heuristic to obtain the feasible solution}
Let $R^{*}_{i,k}$ be the optimal solution of the relaxed problem $\mathcal{P}^{\mathrm{relaxed}}_{\mathrm{robust}}$ for all $i$ and $k$. $R_{i,k}^{*}$ can be obtained very fast as the problem is convex (Proposition~\ref{prop:convexrobust}). Now, we describe a simple heuristic to obtain a feasible solution based on the optimal solution.
\begin{figure}
\begin{algorithm}[H]
\small
\begin{algorithmic}
\STATE {\bf Initialization:} $L_0=0$.
\FOR{$k=1,\ldots, K$}
\FOR{$i=1,\ldots, N$}
\STATE $R_{i,k}=\max \{R_{j}: R_{j}\leq R^{*}_{i,k}\}$.
\STATE $\bar{R}_{i,k}=\min \{R_{j}: R_{j}\geq R^{*}_{i,k}\}$.
\ENDFOR
\STATE Compute $L_k=L_{k-1}+(R^{*}_{i,k}-R_{i,k})$.
 \STATE $\gamma_k=\min\{R_{i,k}:i\in \mathcal{A}_{\alpha,k}\}$.
\ENDFOR
\IF{$L_{K}=0$} 
\STATE Exit since it is the optimal solution.
\ENDIF
\FOR{$k=1,\ldots,K$}
\IF{$L_k>\sum_{i\in \mathcal{A}_{\alpha,k}}(\bar{R}_{i,k}-R_{i,k})$}
\STATE Update: $R_{i,k}=\bar{R}_{i,k}$ for all $i\in \mathcal{A}_{\alpha,k}$.
\STATE $L_{k+1}=L_{k}-\sum_{i\in \mathcal{A}_{\alpha,k}}(\bar{R}_{i,k}-R_{i,k})$.
\STATE Update $\gamma_k=\min\{R_{i,k}:i\in \mathcal{A}_{\alpha,k}\}$.
\ENDIF
\ENDFOR
\end{algorithmic}
\caption{provides a feasible solution of $\mathcal{P}_{\mathrm{robust}}$ from the optimal solution of $\mathcal{P}^{\mathrm{relaxed}}_{\mathrm{robust}}$}\label{algo2}
\end{algorithm}
\vspace{-0.3in}
\end{figure}
Note that the Algorithm~\ref{algo2} first computes $L_{k}$, the bandwidth that is saved because of the down-quantization of the rates. Then the algorithm tries to fetch the tiles within the set $\mathcal{A}_{\alpha,k}$ at one level higher (assigning to $\bar{R}_{i,k}$)  provided that the stall time does not increase compared to the optimal solution of the relaxed problem ($L_{k}$ is not negative).  Thus, similar to Algorithm 1, the algorithm tries to minimize any wasted bandwidth because of the down-quantization. Hence, it will achieve higher rate compared to the simple down-quantization. Note that the algorithm tries to fetch the higher quality tiles for chunk $k$, then proceeds to fetch higher quality ones for chunk $k+1$ and so on.  

\begin{theorem}
The above algorithm gives a feasible solution of the problem $\mathcal{P}_{\mathrm{robust}}$. The stall time does not increase compared to the optimal solution of the relaxed problem.
\end{theorem}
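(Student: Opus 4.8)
The plan is to split the claim for Algorithm~\ref{algo2} into two parts — feasibility and the stall bound — and to dispatch feasibility essentially by construction before turning to the schedule.

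First I would verify feasibility of the output $\hat{R}_{i,k}$. Every value the algorithm ever assigns is one of the two quantization levels $R_{i,k}=\max\{R_j:R_j\le R^{*}_{i,k}\}$ or $\bar{R}_{i,k}=\min\{R_j:R_j\ge R^{*}_{i,k}\}$, both of which lie in $\{R_0,\ldots,R_m\}$; since the relaxed optimum satisfies $R_0\le R^{*}_{i,k}\le R_m$, the discrete constraint and the box $R_0\le \hat{R}_{i,k}\le R_m$ both hold. The robust constraint (\ref{eq:robust_constraint}) is immediate: after each possible update the algorithm sets $\gamma_k=\min\{\hat{R}_{i,k}:i\in\mathcal{A}_{\alpha,k}\}$, so $\gamma_k\le \hat{R}_{i,k}$ for every $i\in\mathcal{A}_{\alpha,k}$ by definition of the minimum. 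The timing relation (\ref{eq:tk}) and the buffer cap (\ref{eq:max_buffer}) are met by the same $\zeta_k$-adjustment used for the relaxed solution, and (\ref{eq:equi}) then holds because $\tilde{t}_k$ obeys the recursion (\ref{b_k}); this is the routine part.

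The substance is the stall bound, and the key object is the running residual budget $L_k$. I would first record the invariant that $L_k$ never goes negative: after the first loop $L_k=\sum_{k'\le k}\sum_i (R^{*}_{i,k'}-R_{i,k'})\ge 0$ because down-quantization only lowers rates, and in the second loop an upgrade of chunk $k$ is performed only when $L_k$ exceeds its cost $\sum_{i\in\mathcal{A}_{\alpha,k}}(\bar{R}_{i,k}-R_{i,k})$, which is exactly the amount subtracted off and carried into $L_{k+1}$. Consequently the cumulative rate actually fetched never exceeds that of the relaxed solution, i.e. $\sum_{k'\le k}\sum_i \hat{R}_{i,k'}\le \sum_{k'\le k}\sum_i R^{*}_{i,k'}$ for every $k$. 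This is the robust analogue of the residual-bandwidth argument behind Theorem~\ref{thm:heuristic1}.

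To convert this cumulative-rate domination into a stall bound, I would unroll the play-time recursion (\ref{b_k}) into the max-plus form $\tilde{t}_K=\max_{1\le j\le K}[D_j+(K-j)L]$, where $D_j$ is the download-completion time of chunk $j$, so that the stall (\ref{stall}) equals $\max_j[D_j-(j-1)L]-t_{\mathrm{ini}}$ and is monotone in each $D_j$; it then suffices to show no chunk completes later than under the relaxed schedule, giving $\tilde{t}_K^{\mathrm{alg}}\le \tilde{t}_K^{\mathrm{rel}}$. The main obstacle is precisely this last translation: $D_j$ weights each chunk's rate by $1/C_k$ whereas $L_k$ accounts rate in unweighted units, so cumulative-rate domination does not by itself force $D_j^{\mathrm{alg}}\le D_j^{\mathrm{rel}}$ when the bandwidth $C_k$ varies across chunks. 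I would resolve this in the regime the paper already argues for — $\zeta_k=0$ and $C_k$ essentially constant over the short sliding window — under which unweighted cumulative-rate domination coincides with cumulative completion-time domination; alternatively one strengthens the upgrade test to compare the induced increment of $D_j$ rather than of raw rate. Either way, the down-quantization-only solution already dominates the relaxed schedule chunk-by-chunk, and the crux is showing the budget-limited upgrades cannot undo that domination.
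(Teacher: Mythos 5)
Your decomposition and your budget argument coincide with the paper's own reasoning --- in fact the paper states this theorem with no formal proof at all, only the preceding remark that an upgrade is performed ``provided that the stall time does not increase ... ($L_k$ is not negative)'' --- so your write-up is already more careful than what the paper provides. On feasibility your argument is complete and matches the intended one: every assigned rate is a quantization level in $\{R_0,\ldots,R_m\}$, the algorithm sets $\gamma_k=\min_{i\in\mathcal{A}_{\alpha,k}}R_{i,k}$ so (\ref{eq:robust_constraint}) holds, and the timing variables can always be chosen to satisfy (\ref{eq:equi}), (\ref{eq:tk}), (\ref{eq:max_buffer}) because stalls are penalized in the objective rather than constrained.

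The obstacle you isolate in the stall-time part is not a weakness of your proof; it is a genuine gap in the theorem as stated. The budget $L_k$ is kept in rate units while completion times weight chunk $k$'s rate by $1/C_k$, and with varying bandwidth the implication really does fail. Take $K=2$ with $C_1\gg C_2$; let chunk $1$'s tiles outside $\mathcal{A}_{\alpha,1}$ generate savings $s_1$ while its tiles inside $\mathcal{A}_{\alpha,1}$ sit exactly at quantization levels (zero upgrade cost), and let chunk $2$'s tiles in $\mathcal{A}_{\alpha,2}$ sit just above quantization levels (savings $\approx 0$, upgrade cost $c_2$ with $0<c_2<s_1$). Algorithm~\ref{algo2} then upgrades chunk $2$, and chunk $2$'s completion time changes by $L\bigl(c_2/C_2 - s_1/C_1\bigr)>0$ whenever $c_2 > s_1 C_2/C_1$, so $\tilde{t}_2$ and hence the stall strictly increase relative to the relaxed optimum: savings earned at high bandwidth are spent at low bandwidth, where they cost more time than they saved. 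Both of your repairs are the right ones: either assume $C_k$ (essentially) constant over the optimization window --- the regime the paper implicitly invokes for its online algorithm --- in which case your prefix domination $\sum_{k'\le j}\sum_i \hat{R}_{i,k'}\le\sum_{k'\le j}\sum_i R^{*}_{i,k'}$ combined with the max-plus form $\tilde{t}_K=\max_j\{D_j+(K-j)L\}$ closes the argument; or redefine the budget in time units, $L_k=L_{k-1}+L\sum_i(R^{*}_{i,k}-R_{i,k})/C_k$ with upgrade test $L_k> L\sum_{i\in\mathcal{A}_{\alpha,k}}(\bar{R}_{i,k}-R_{i,k})/C_k$, which makes every prefix completion time dominated for arbitrary $C_k$. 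One of these should appear as a hypothesis of (or modification to) the theorem; as written, only the feasibility half is unconditionally true.
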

Note that if the quantization levels are very close, then the feasible solution obtained will be very close to the optimal solution.

\subsection{Optimal Algorithm for a class of utility function}\label{sec:optimal}
In the previous section we describe a simple heuristic algorithm which gives a feasible solution from the relaxed problem. The algorithm performs good if there is a continuum of the achievable rates. However, in practice this may not happen. In the following, we describe an algorithm (Algorithm~\ref{algo3})  for selecting the download rates for tiles in a chunk and we show that the algorithm is optimal even for the original problem $\mathcal{P}_{\mathrm{robust}}$ for a class of utility functions under some assumptions which frequently arise in practice.
\begin{figure}
\begin{algorithm}[H]
\small
\begin{algorithmic}
\STATE {\bf Initialization:} $R_{i,k}=R_0$ for all $i$ and $k$.
\STATE $R^{\prime}_{i,k}=R_{i,k}$ for all $i$ and $k$.
\FOR{$k=1,\ldots, K$}
\STATE $\gamma_k=\min\{R_{i,k}:i\in \mathcal{A}_{\alpha,k}\}$.
\STATE Compute $t_k$ according to (\ref{eq:tk}).
\STATE $t^{\prime}_k=t_k$.
\ENDFOR
\IF{$t_{K}\geq (K-1)L+t_{\mathrm{ini}}$}
\STATE Exit.
\ENDIF
\FOR{$k=1,\ldots, K$}
\FOR{$j=1,\ldots, m$}
\STATE Set $R^{\prime}_{i,k}=R_j$ for all $i\in \mathcal{A}_{\alpha,k}$.
\FOR{$f=k,\ldots, K$}
\STATE Compute $t^{\prime}_{f}=t^{\prime}_{f-1}+L\dfrac{\sum_{i}R^{\prime}_{i,f}}{C_f}$.
\ENDFOR
\IF{$t_K\leq (K-1)L+t_{\mathrm{ini}}$}
\STATE Set $R_{i,k}=R^{\prime}_{i,k}$ for all $i$.
\STATE Update $t_{k}=t^{\prime}_{k}$.
\ENDIF
\ENDFOR
\STATE Update $\gamma_k=\min\{R_{i,k}:i\in \mathcal{A}_{\alpha,k}\}$.
\ENDFOR
\end{algorithmic}
\caption{gives an optimal solution of $\mathcal{P}_{\mathrm{robust}}$ for a class of utility functions under some assumptions stated in Theorem~\ref{thm:optimal}}\label{algo3}
\end{algorithm}
\vspace{-0.3in}
\end{figure}

Recall from (\ref{eq:tk})  that $t_k$ is the start time of downloading chunk $k$. In  Algorithm~\ref{algo3}, first all the tiles are fetched at the base rate $R_0$. Now for each chunk $k$, it fetches the tiles within the set $\mathcal{A}_{\alpha,k}$ at the highest possible rate if the download-time does not exceed the play-back time {\it i.e.} it does not increase the stall. The output of the algorithm provides the rate $R_{i,k}$ at which the tile $i$ of chunk $k$ has to be downloaded. The algorithm is very simple to implement. Note that  Algorithm 3 {\em only} fetches the tiles at a higher quality within the set $\mathcal{A}_{\alpha,k}$ for chunk $k$. It {\em does not fetch} all the tiles  within chunk $k$ at a higher rate.

The cardinality of $\mathcal{A}_{\alpha,k}$ is higher if $\alpha$ is high or the distribution of the FoV estimation has a higher variance. Thus, if there is a higher variance regarding the FoV, the algorithm needs to fetch more tiles in order to guarantee a rate with probability $\alpha$. 

Note that one drawback of the above approach is that the qualities may vary over different chunks. For example, if the future bandwidth is very poor, it will fetch the tiles at a lower qualities; though it can fetch the tiles of the current chunk at a higher rate as the current bandwidth is high. To avoid the above, one can reduce the maximum rate one can fetch the tiles for current chunks and use the residual bandwidth for downloading tiles not belonging to $\mathcal{A}_{\alpha,k}$.

In the online version, we only optimize for some $W$ chunks ahead ($W<K$). Thus, the online version can be computed very fast. 

In the following we show that the above algorithm is optimal under some assumptions.
\begin{assumption}\label{assum:ordering}
$|\mathcal{A}_{\alpha,k}|\geq |\mathcal{A}_{\alpha,k-1}|$ for all $K\geq k\geq 2$.
\end{assumption}
Recall that the FoV is estimated. The above assumption entails that the estimate is assumed to be more accurate in near future rather than the distant one as we need fewer tiles to cover the FoV with probability $\alpha$. In practice, we also commonly observe that the prediction is more accurate in the near future compared to the distant future\cite{bao}. Hence, the high probability region should consist of fewer tiles in the near future compared to the distant ones. 

Finally, we assume that
\begin{assumption}\label{ref:linearutility}
$U(x)=ax+b$, where $a$, $b$ are constant.
\end{assumption}
Assumption~\ref{ref:linearutility} entails that the utility varies linearly with the rate.
\begin{theorem}\label{thm:optimal}
Algorithm 3 provides an optimal solution of $\mathcal{P}_{\mathrm{robust}}$ if Assumptions~\ref{assum:ordering} and \ref{ref:linearutility} hold, and $\lambda$ is large enough\footnote{Formally, $\lambda>\sum_{k=1}^{K}U(R_{max})$}.
\end{theorem}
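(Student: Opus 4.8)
The plan is to use Assumption~\ref{ref:linearutility} and the large-$\lambda$ hypothesis to collapse $\mathcal{P}_{\mathrm{robust}}$ into a one-dimensional resource-allocation problem over the guaranteed rates $\gamma_k$, and then to establish that the index-ordered greedy of Algorithm~\ref{algo3} is optimal via an exchange argument driven by Assumption~\ref{assum:ordering}. First I would pin down the stall. On the feasible rate set the utility never exceeds $\sum_{k=1}^{K}U(R_m)<\lambda$, so the stall penalty of any strictly positive stall outweighs the entire achievable utility range; hence the optimum must attain the smallest possible stall and only then maximize utility. Since every download time $D_k=L\sum_i R_{i,k}/C_k$ is nondecreasing in the rates and $\tilde t_K$ is monotone in the $D_k$, the pointwise-minimal assignment $R_{i,k}=R_0$ minimizes the stall. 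Two cases result: if even this base-layer assignment already violates the terminal deadline, i.e. $t_K\ge (K-1)L+t_{\mathrm{ini}}$, then any rate increase only enlarges $\tilde t_K$ and can never be compensated, so base layer is optimal, which is exactly the exit branch of Algorithm~\ref{algo3}; otherwise a zero-stall assignment exists, the optimum satisfies $\tilde t_K=(K-1)L+t_{\mathrm{ini}}$, and the objective reduces to maximizing $\sum_k U(\gamma_k)=a\sum_k\gamma_k+bK$ with $a>0$, so it suffices to maximize $\sum_{k=1}^{K}\gamma_k$.

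Next I would shrink the decision space. Because $\gamma_k=\min_{i\in\mathcal{A}_{\alpha,k}}R_{i,k}$ depends only on the tiles of $\mathcal{A}_{\alpha,k}$, and, as already noted, an optimal solution never raises a tile outside $\mathcal{A}_{\alpha,k}$ above $R_0$, I can fix $R_{i,k}=R_0$ for $i\notin\mathcal{A}_{\alpha,k}$; likewise exceeding the minimum on any tile inside $\mathcal{A}_{\alpha,k}$ only wastes download time, so an optimum sets $R_{i,k}=\gamma_k$ for all $i\in\mathcal{A}_{\alpha,k}$. Then chunk $k$'s download time is a function of $\gamma_k$ alone, $D_k(\gamma_k)=L\big(|\mathcal{A}_{\alpha,k}|\gamma_k+(N-|\mathcal{A}_{\alpha,k}|)R_0\big)/C_k$, and the no-stall requirement becomes a budget on the (prefix sums of the) $D_k(\gamma_k)$. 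The marginal download cost of one unit of guaranteed rate on chunk $k$ is therefore proportional to $|\mathcal{A}_{\alpha,k}|$, so a chunk with a larger high-probability set is a strictly more expensive place to buy guaranteed rate. This is precisely where Assumption~\ref{assum:ordering} enters: since $|\mathcal{A}_{\alpha,k}|$ is nondecreasing in $k$, the earlier chunks are the cheapest, and the problem is a resource allocation in which the per-unit cost of the items is ordered by index.

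With this reduction I would argue that Algorithm~\ref{algo3} — which starts at base layer and raises $\gamma_1$ to the highest feasible level, then $\gamma_2$, and so on — is optimal by exchange. Suppose an optimal assignment $\gamma^{*}$ attains a strictly larger $\sum_k\gamma_k$ than the greedy output $\gamma^{g}$, and let $k$ be the first index where $\gamma^{*}_k\neq\gamma^{g}_k$. Greedy ceased raising chunk $k$ only because a further increment would breach the deadline given the already-maximized cheaper earlier chunks; hence $\gamma^{*}$ must place its extra guaranteed rate on some later chunk $k'>k$. Since $|\mathcal{A}_{\alpha,k}|\le|\mathcal{A}_{\alpha,k'}|$, I would transfer that guaranteed rate from $k'$ back onto $k$: this does not increase the total download time and does not decrease $\sum_k\gamma_k$, and iterating drives $\gamma^{*}$ to $\gamma^{g}$, contradicting strict improvement.

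The hard part will be making the exchange rigorous against the discreteness of $\{R_0,\dots,R_m\}$ together with the prefix (per-chunk) structure of the no-stall constraint. A rate level removed from chunk $k'$ need not align with an admissible level on chunk $k$, so the transfer must be carried out in units of actual rate increments, using Assumption~\ref{assum:ordering} to guarantee that each earlier-chunk increment costs no more download time than the later-chunk increment it replaces. One must also verify that shifting load onto an earlier chunk, while it relaxes the terminal deadline that Algorithm~\ref{algo3} checks, does not create an intermediate stall — i.e. that both the terminal and all prefix deadlines remain satisfied throughout the exchange. I expect the index-monotonicity of $|\mathcal{A}_{\alpha,k}|$ to be exactly the structural fact that keeps every such step feasible and non-decreasing in objective, closing the argument.
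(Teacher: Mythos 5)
Your overall strategy is the same one the paper uses: invoke the large-$\lambda$ hypothesis to rule out any increase in stall, use Assumption~\ref{ref:linearutility} to reduce the objective to maximizing $\sum_k \gamma_k$, argue that an optimum keeps tiles outside $\mathcal{A}_{\alpha,k}$ at $R_0$ and a common rate $\gamma_k$ inside, and then run an exchange argument powered by Assumption~\ref{assum:ordering}. The only organizational difference is that the paper packages the exchange as an induction on the number of chunks (assuming a strictly better solution for $k+1$ chunks, it uses the inductive hypothesis to bound the first-$k$ prefix $\sum_{j\le k}\gamma'_j\le\sum_{j\le k}\gamma_j$ and then transfers the deficit onto chunk $k+1$ using $|\mathcal{A}_{\alpha,j}|\le|\mathcal{A}_{\alpha,k+1}|$, reaching a contradiction), whereas you do a first-differing-index exchange; these are two dressings of the same idea.

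The genuine problem is that you never carry out the decisive step: you assert that transferring guaranteed rate from a later chunk $k'$ to an earlier chunk $k$ ``does not increase the total download time,'' then defer its justification to ``the hard part'' and close with ``I expect \dots to close the argument.'' That step is exactly where the claim is delicate, and the obstacles you flag are real rather than routine. With chunk-varying bandwidth, moving $\Delta$ units of guaranteed rate from chunk $k'$ to chunk $k<k'$ changes total download time by $L\Delta\bigl(|\mathcal{A}_{\alpha,k}|/C_k-|\mathcal{A}_{\alpha,k'}|/C_{k'}\bigr)$, and Assumption~\ref{assum:ordering} orders the cardinalities but not these ratios, so the sign is uncontrolled; moreover, shifting load earlier increases every intermediate prefix download time, so the per-chunk deadlines you correctly worry about can break even when the terminal one holds. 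The paper's own proof closes these holes only by adopting a coarser feasibility bookkeeping: it accounts in \emph{total bits} rather than time (cf.\ (\ref{eq:bit}) and the inequalities that follow), and it enforces only the terminal deadline $t_K\le (K-1)L+t_{\mathrm{ini}}$ --- precisely the single check Algorithm~\ref{algo3} performs. Under that reading (one bit budget, effectively uniform bandwidth over the horizon), ``no more total bits, with the cheaper bits placed on chunks of smaller $|\mathcal{A}_{\alpha,j}|$'' suffices and the induction goes through. So to finish your argument you must either make that modeling commitment explicit, or add a hypothesis ordering the quantities $|\mathcal{A}_{\alpha,k}|/C_k$ (and handle the prefix deadlines); from Assumptions~\ref{assum:ordering} and \ref{ref:linearutility} alone, as you state them, your exchange step does not follow.
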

\begin{proof}
See Appendix~\ref{sec:proof}.
\end{proof}

The above theorem thus, tells that under reasonable assumptions a simple heuristic algorithm such as Algorithm 3 can be optimal for a large class of utility functions.

Intuitively,  Assumption~\ref{assum:ordering} entails that one can gain same utility by consuming lower bandwidth by fetching higher quality tiles  for nearby chunk as it needs to fetch smaller number of tiles compared to the chunks of distant future.  Assumption \ref{ref:linearutility} further characterizes that one will not lose any utility by fetching the best possible quality tiles of the current chunk instead saving the bandwidth for fetching a higher quality tiles for later chunks. 

\subsection{Online Algorithm}
Now, we describe how to obtain an online algorithm based on the offline algorithms~\ref{algo2} and \ref{algo3}. We consider a {\em sliding window} type algorithm. The bandwidth and head movement is predicted for $W$ chunks ahead and then the optimal algorithm is employed for obtaining the optimal download rate for each tile within a chunk.  If the tiles of the $c$-th chunk is being fetching, the online algorithm gives the download rates for chunks $c+1, \ldots,c+W$ by solving the following optimization problem:
\begin{eqnarray}
\mathcal{P}^{\mathrm{online}}_{\mathrm{robust}}: & \text {maximize } &\sum_{k=c+1}^{c+W}U(\gamma_k)\nonumber\\ & &-\lambda(\tilde{t}_{W+c}-(W+c-1)L-t_{\mathrm{ini}})\nonumber\\
& \text{subject to } &  (\ref{eq:equi}), (\ref{eq:tk}), (\ref{eq:max_buffer}), (\ref{eq:robust_constraint})\nonumber\\
& \text{var}: & \gamma_k\geq 0,  R_{i,k}\in \{R_0,\ldots, R_{m}\}\nonumber
\end{eqnarray}
The $c+1$th chunk can only start downloading at time $t_{c+1}$. Since this is a sliding window protocol so we optimize after each chunk has been downloaded with new prediction of the FoV and the bandwidth. Note that the width of chunk is small (in the order of few seconds), hence the optimization problem $\mathcal{P}^{\mathrm{online}}_{\mathrm{robust}}$ needs to be solved very frequently.   

It is easy to discern that the Algorithms ~\ref{algo2} and \ref{algo3} can be easily adapted to the online version with $W+c$ in place of $K$. Because of the low complexity nature of the Algorithms~\ref{algo2} and \ref{algo3} they can be easily adapted for solving the online version repeatedly. Algorithm 2 is obtained from solving the relaxed version of the problem by relaxing the discrete strategy space as discussed in Section V-B.

\textbf{Bandwidth Prediction Error}: We can solve the problem of bandwidth prediction error by making the criteria that the first $\eta$ tiles need to be downloaded at the base layer before using our proposed algorithms as we mentioned in Section IV-B. 


\textbf{FoV prediction error}: Since in the optimization problem we try to fetch the tiles which have high probability to be the part of the FoV. Hence, the user may watch the same quality video with a very high probability.   Thus, the impact of the FoV prediction error will be low compared to Algorithm 1. Similar to Algorithm 1, if the FoV consists of tiles which are not in the part of $\mathcal{A}_{\alpha,k}$, the user can still watch the video in the base layer. 
\section{Numerical Evaluation}\label{sec:simulation}
We, now, evaluate the strengths of our proposed algorithms in a simulated setting.
\subsection{Parameter Setup}
The utility function is considered to be linear. Without loss of generality, we consider $U(x)=x$. The tile configuration for each chunk is considered to be $4\times 8$ which maps the overall 360 degree video in 2D. The FoV is considered to be at-most 120 degrees in the horizontal direction and 120 degree in the vertical direction\cite{feng}. Hence, in our case the FoV consists of a rectangle of $2\times 3$ tiles. The duration of each chunk is considered to be 2 seconds. The total number of chunks ($K$) is $120$. Hence, the length of the video is $4$ minutes.

Each chunk's rate is $x$ Mbps. We consider $x\in \{8,16,24,32\}$. Hence, each tile's rate is $x/32$ Mbps; $x/32 \in \{0.25, 0.5, 0.75 ,1\}$.  

For bandwidth traces, we used a dataset from \cite{trace}, which consists of continuous 1-second measurement of video streaming throughput of a moving device. However, the trace is for HDSPA mobile network. In the current 4G-LTE network, the rate is typically 5-10 times higher. In order to fit in the LTE setting, we linearly scale the BW profile by 5 times.   We use 40 traces which are at least 240 seconds long. In our simulation, the predicted bandwidth is computed by multiplying the actual value in the bandwidth trace by $1+e$ where $e$ is uniformly drawn from $[-p,p]$ where $0\leq p< 1$. Higher $p$ indicates higher variability. 

We employ the online algorithm. We run the online algorithm after the completion of the download of the each chunk for $W$ chunks ahead. Recall that $W$ is the size of the sliding window.   

 \cite{feng} shows that the FoV can be predicted with a higher accuracy in the short run. The estimation of the FoV is thus considered to be the following:  for each chunk, the FoV coincides with a  particular $2\times 3$ tiles with a high probability $\beta\geq 0.5$ (Fig.~\ref{fig:fov}). The specific area can vary over the chunks. The  event that FoV is outside this region is considered to be uniformly distributed. Hence, any other tile other than those tiles have equal probability to be in the FoV.  Note that the cardinality of $\mathcal{A}_{\alpha,k}$ is the same for all $k$. {\em Higher $\beta$ indicates that there is more certainty regarding the FoV (Fig.~\ref{fig:fov}).} For example, when $\beta=1$, the FoV is fully known. Lower $\beta$ indicates there is more uncertainty regarding the FoV. 

Recall from (\ref{eq:pe})  or (\ref{eq:probust}) that $\lambda$ is the weight corresponding to minimizing the stall time. We consider that $\lambda=1000$ {\it i.e.}, we give more preference to minimize the stall time.  Note that since the utility is linear and the $|\mathcal{A}_{\alpha,k}|$ is the same for all $k$, hence by Theorem~\ref{thm:optimal} Algorithm 3 is optimal for $\mathcal{P}_{\text{robust}}$. We, thus, evaluate  the strengths of our proposed  Algorithms 1 and 3. For Algorithm 3, we set $\alpha$ at $0.95$. For comparison, we consider two simple strategies. These are explained in the following:

\textbf{Algorithm without taking into account of the FoV prediction}: The algorithm will try to fetch all the tiles of the same quality if the bandwidth permits. In other words, it does take into account of the FoV prediction.  Thus, the qualities will be the same across the tiles, however, the quality may not be very good for viewing if the bandwidth is low.  The algorithm tries to fetch the tiles of the current chunk in the highest possible quality if the bandwidth permits before fetching the higher quality tiles of the next chunk. This is a simple algorithm, and thus, it is used widely in practice \cite{facebook}. We denote this algorithm as \textbf{Baseline algorithm}. Note that  similar to Algorithm~\ref{algo3} (Section~\ref{sec:optimal}) the Baseline  algorithm tries to fetch the higher quality tiles of the current chunk before moving to the next chunk. The only difference is that Algorithm~\ref{algo3}  only fetches the higher quality tiles which belong to the set $\mathcal{A}_{\alpha,k}$ in chunk $k$ and the rest of the tiles are fetched at the base rate whereas the Baseline algorithm fetches {\em all} the tiles in the same quality. 

\textbf{Greedy Algorithm}:
This algorithm is proposed by \cite{feng}. The algorithm only fetches those tiles which have the highest probability to be the part of FoV. Again similar to the {\bf Baseline} algorithm it fetches the higher quality tiles of the current chunk first if there is an extra bandwidth before it fetches the tiles of the next chunk.  Though in this algorithm only a few tiles are required to be downloaded, if the prediction is not accurate, the viewer can see black spots even though it can save a lot of bandwidth. 

{\em We study the strength of our proposed algorithms Algorithms~\ref{algo1} (Section~\ref{sec:heuristic}) and \ref{algo3}(Section~\ref{sec:optimal}) with respect to these two algorithms. } We evaluate the algorithms based on the QoE metrics (Sections~\ref{sec:qos} and \ref{sec:robust}), the distribution of the bitrates in the FoV and the stall duration. 

%
%


\subsection{Results and Discussions}
\begin{figure*}
\begin{minipage}{0.35\textwidth}
\includegraphics[width=\textwidth]{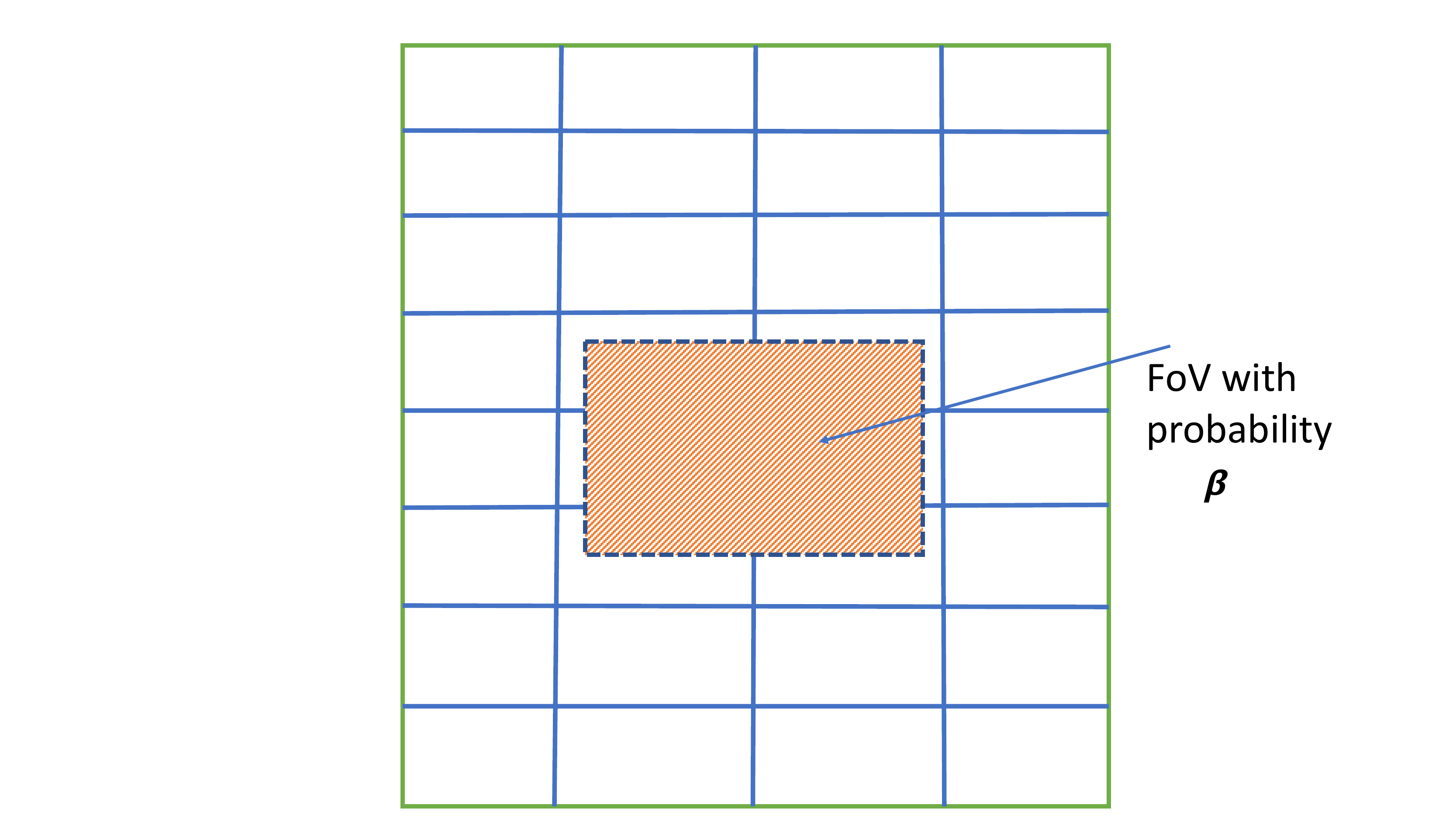}
\vspace{-0.2in}
\caption{The tile based segmentation and the rectangular $2\times 3$ tiles is FoV with probability $\beta$.}
\vspace{-0.2in}
\label{fig:fov}
\end{minipage}\hfill
\begin{minipage}{0.3\textwidth}
\includegraphics[width=\textwidth]{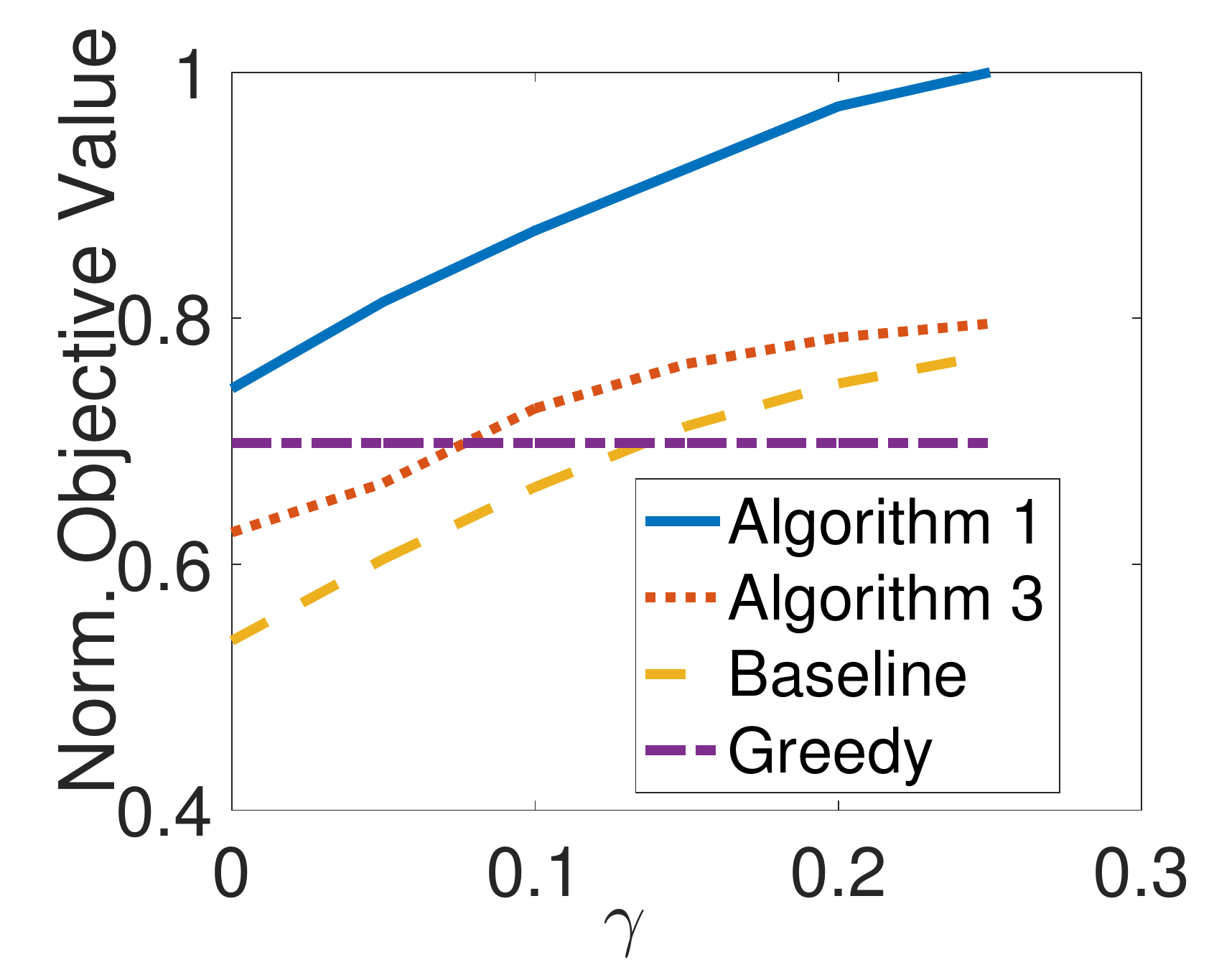}
\vspace{-0.2in}
\caption{The variation of the normalized objective function in $\mathcal{P}$ with $\gamma$ for $\beta=0.8, p=0.2$, and $W=2$.}
\label{fig:gamma}
\vspace{-0.2in}
\end{minipage}\hfill
\begin{minipage}{0.3\textwidth}
\includegraphics[width=\textwidth]{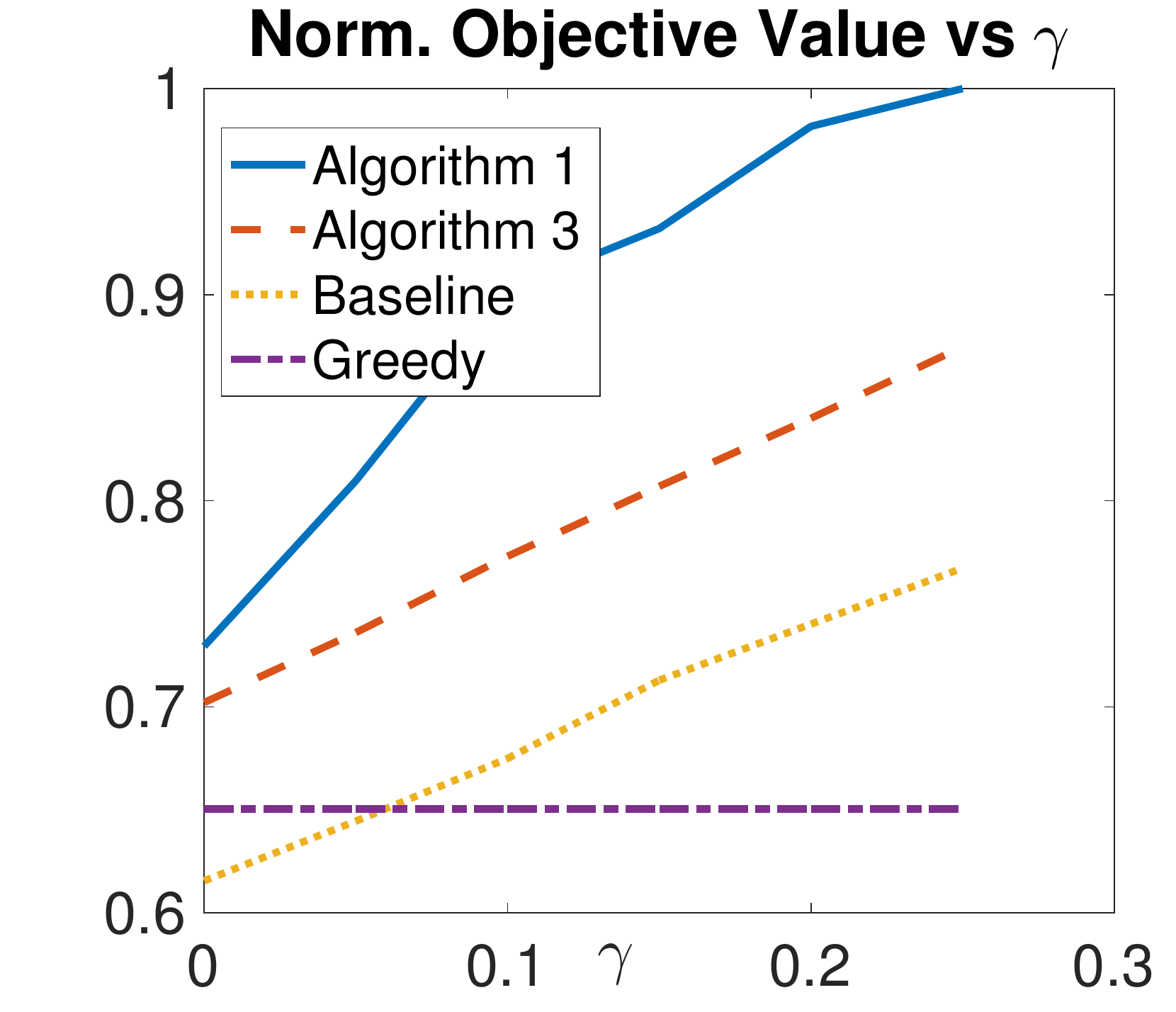}
\vspace{-0.2in}
\caption{The variation of the normal objective function in $\mathcal{P}$ with $\gamma$ for $\beta=0.6, p=0.2$ and $W=2$.}
\vspace{-0.2in}
\label{fig:gammavserror}
\end{minipage}
\end{figure*}

\begin{figure*}
\begin{minipage}{0.32\textwidth}
\includegraphics[trim=0in 0in .4in .4in, clip, width=\textwidth]{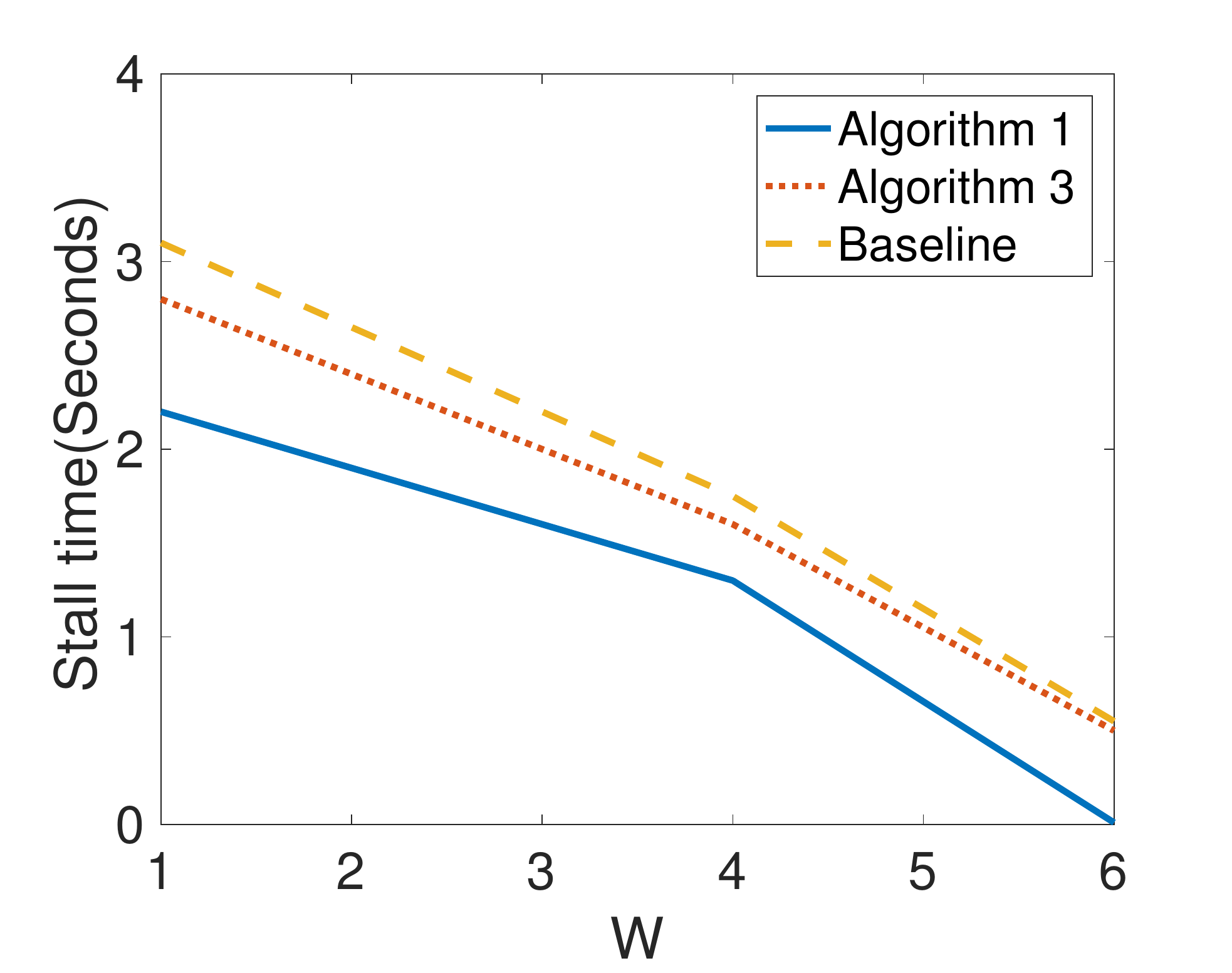}
\vspace{-0.3in}
\caption{The variation of the stall time with the window size W for $\beta=0.8, p=0.25, \gamma=0.1$.}
\label{fig:stall}
\vspace{-0.2in}
\end{minipage}\hfill
\begin{minipage}{0.32\textwidth}
\includegraphics[width=\textwidth]{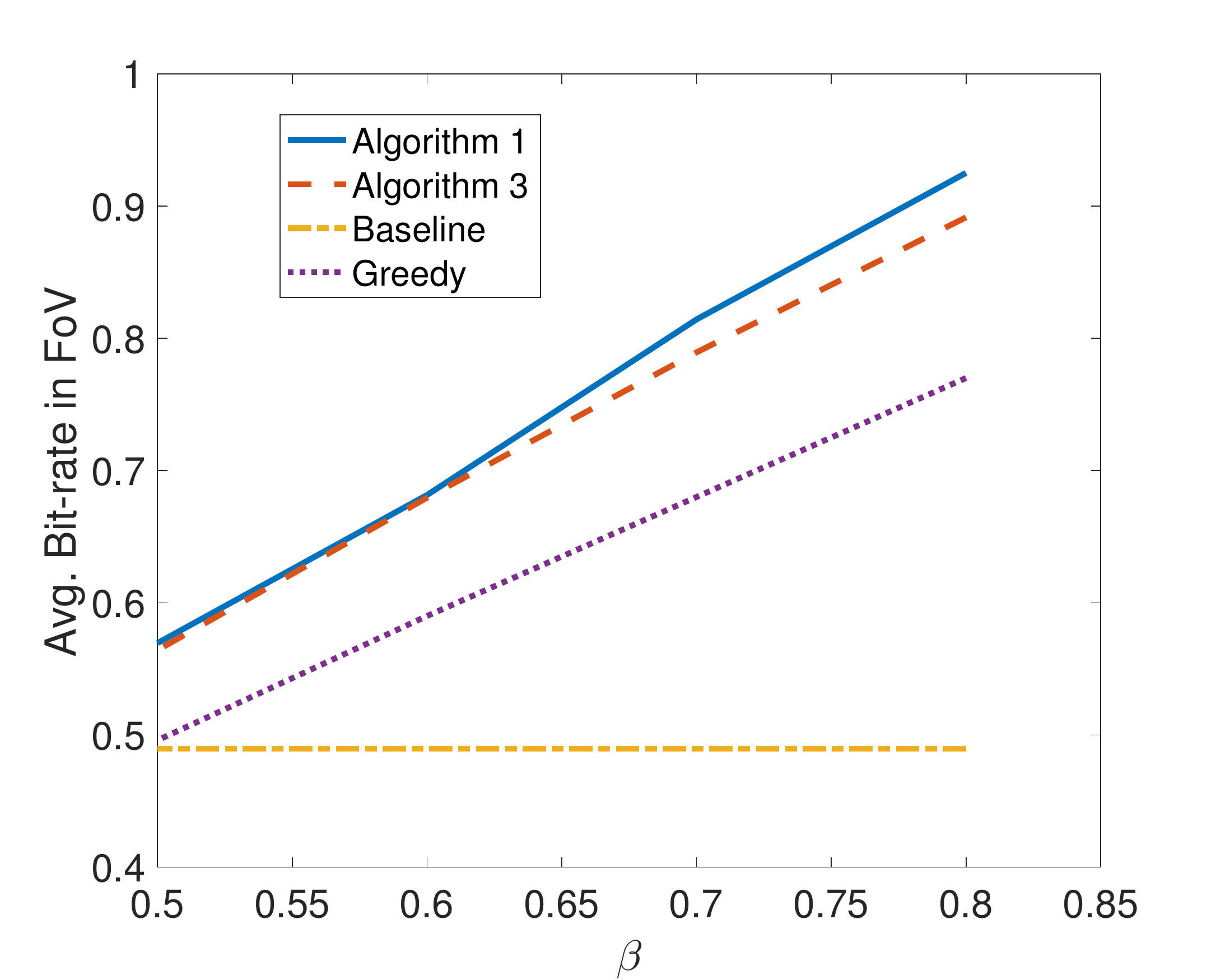}
\vspace{-0.2in}
\caption{The variation of the average bit-rate of the of the downloaded tiles with $\beta$ within the FoV for $p=0.25, W=4$.}
\vspace{-0.2in}
\label{fig:avg_rate}
\end{minipage}\hfill
\begin{minipage}{0.32\textwidth}
\includegraphics[width=\textwidth]{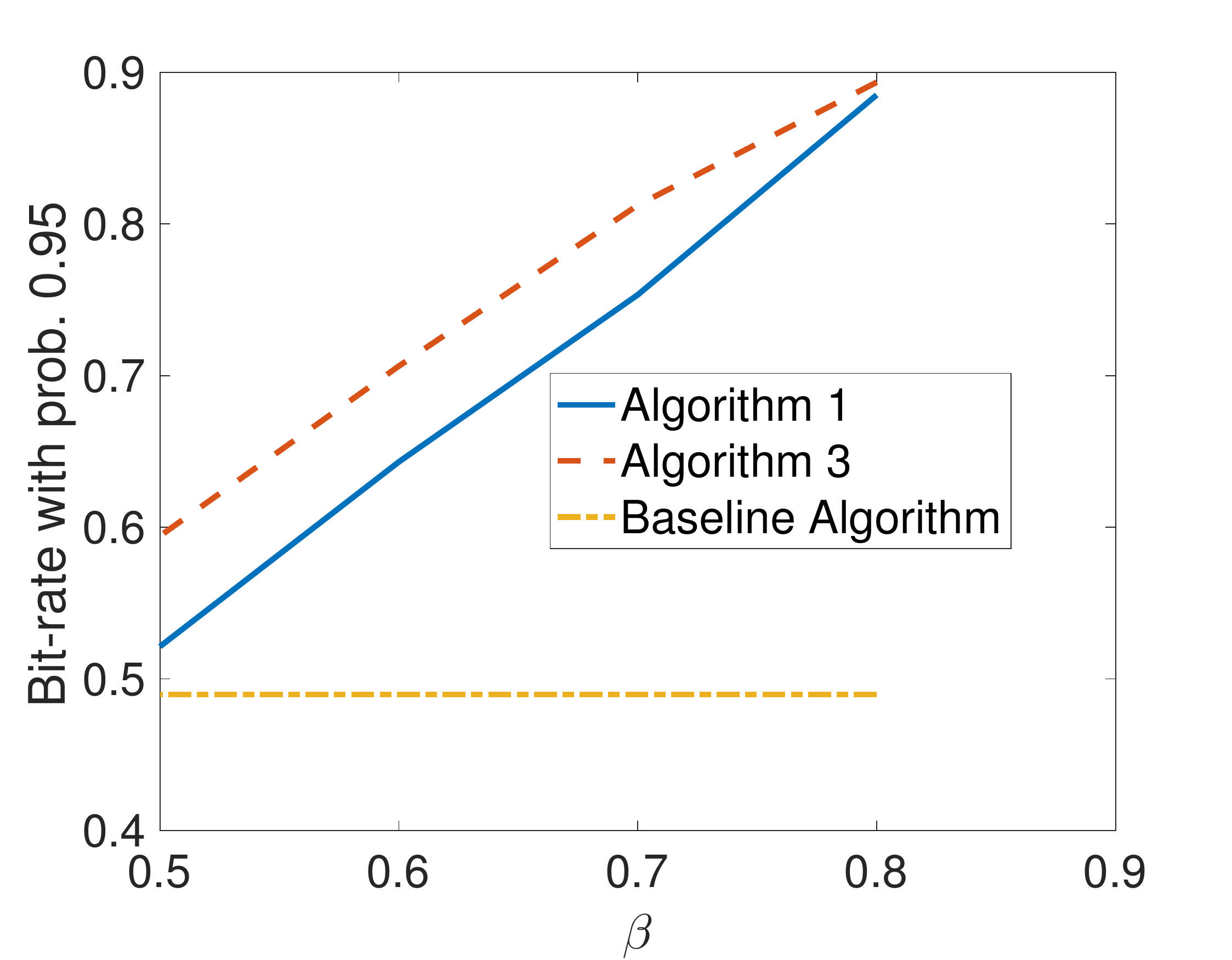}
\vspace{-0.2in}
\caption{The average of the minimum rate among the tiles within the set $\mathcal{A}_{\alpha,k}$ over the chunks for $p=0.25, W=4, \alpha=0.95$ and different values of $\beta$. Greedy algorithm gives $0$ rate.}
\vspace{-0.2in}
\label{fig:robust}
\end{minipage}
\end{figure*}

\subsubsection{Impact of $\gamma$} Note from $\mathcal{P}_e$ (cf. (\ref{eq:pe})) that $\gamma$ is the weight corresponding to the expected sum of the rates of the tiles in the FoV. Fig.~\ref{fig:gamma} shows the variation of the objective value  (cf. (\ref{eq:pe})) obtained by different algorithms with $\gamma$. As we mentioned in Section~\ref{sec:qos} a lower value of $\gamma$ indicates that the QoE will be mostly governed by the minimum rate among the tiles in the viewing area (cf. (\ref{eq:pe})).  Higher value of $\gamma$ indicates that the QoE is governed by the average rate among the tiles in the viewing area rather than the minimum rate. 

As $\gamma$ increases, the objective value increases since the weight increases. Algorithm 1 outperforms the other algorithm by at least 20\%.  When $\gamma$ is high it outperforms by more than 30\%. Algorithm 1 is obtained from the solution of the relaxed version of $\mathcal{P}_e$, hence, it performs well  in maximizing the objective. Though our proposed Algorithm~\ref{algo3} is not meant for maximizing the expected QoE it still works better compared to  the greedy algorithm  except for lower values of $\gamma$. This is because when $\gamma$ is low, the optimal algorithm should  fetch the the tiles at the same quality. Thus, the greedy algorithm performs better. However, as $\gamma$ increases, an optimal algorithm should also fetch the tiles at different qualities. Hence, our proposed Algorithm 3 works better compared to the greedy algorithm. Since the greedy algorithm always fetches the tiles which have the highest likelihood to be the part of the FoV in the same quality, its performance is independent of $\gamma$. Algorithm 3 always outperforms the baseline algorithm.

However, Fig.~\ref{fig:gammavserror} shows that as the FoV prediction error increases  (or, $\beta$ decreases) the greedy algorithm performs poorly since the probability to fetch the wrong tiles increases. Now, Algorithm 3 performs better compared to the greedy algorithm even for small values of $\gamma$. {\em The above shows the strength of our proposed algorithms compared to the greedy algorithm and the baseline approach.} Fig.~\ref{fig:gammavserror} also shows that the increase in the variance of the FoV prediction also decreases the objective value. Algorithm 1 again outperforms the algorithms by more than 15\%  even when $\gamma$ is low.

\subsubsection{Stall time} Fig.~\ref{fig:stall} shows the variation of the stall time with the sliding window size $W$. As $W$ increases, one can get the estimate for a longer amount of time. Thus, the stall time decreases with $W$. Since the greedy algorithm does not fetch the base layers of all the tiers, hence, if the FoV consists of one of those tiles black out will occur. {\em We, thus, do not consider the stall time corresponding to the greedy algorithm.} Algorithms 1 and 3 both outperform the Baseline algorithm as those algorithms do not need to fetch the all the tiles of the same quality unlike the Baseline algorithm.

\begin{figure*}
\begin{minipage}{0.32\textwidth}
\includegraphics[trim=0.1in 0in .5in .2in, clip,width=\textwidth]{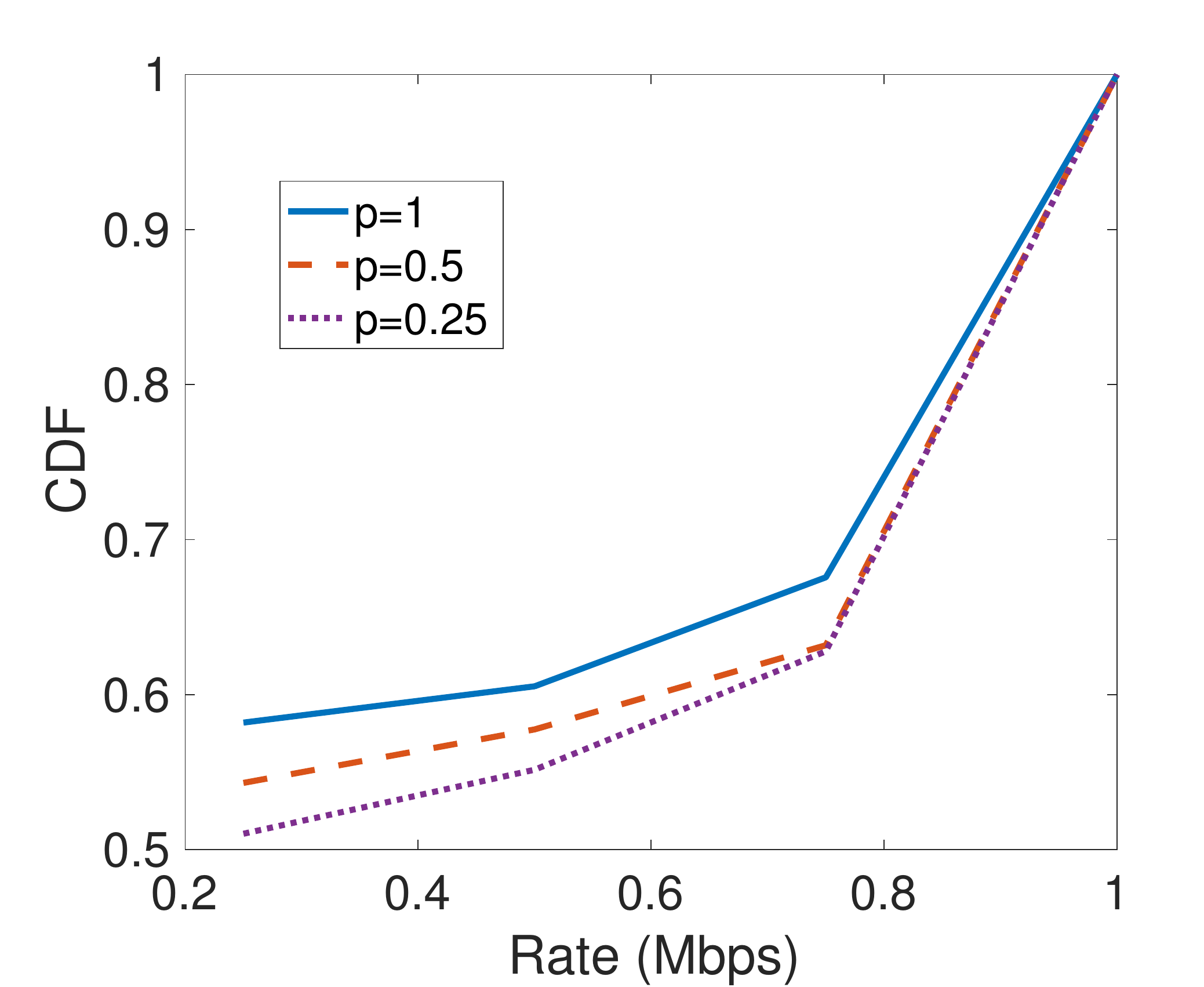}
\vspace{-0.2in}
\caption{The distribution of the bitrates of the downloaded tiles  for different values of $p$ ($\beta=0.8, W=2$).}
\vspace{-0.2in}
\label{fig:bandwidtherror}
\end{minipage}\hfill
\begin{minipage}{0.32\textwidth}
\includegraphics[trim=0.1in 0in .6in .2in, clip, width=\textwidth]{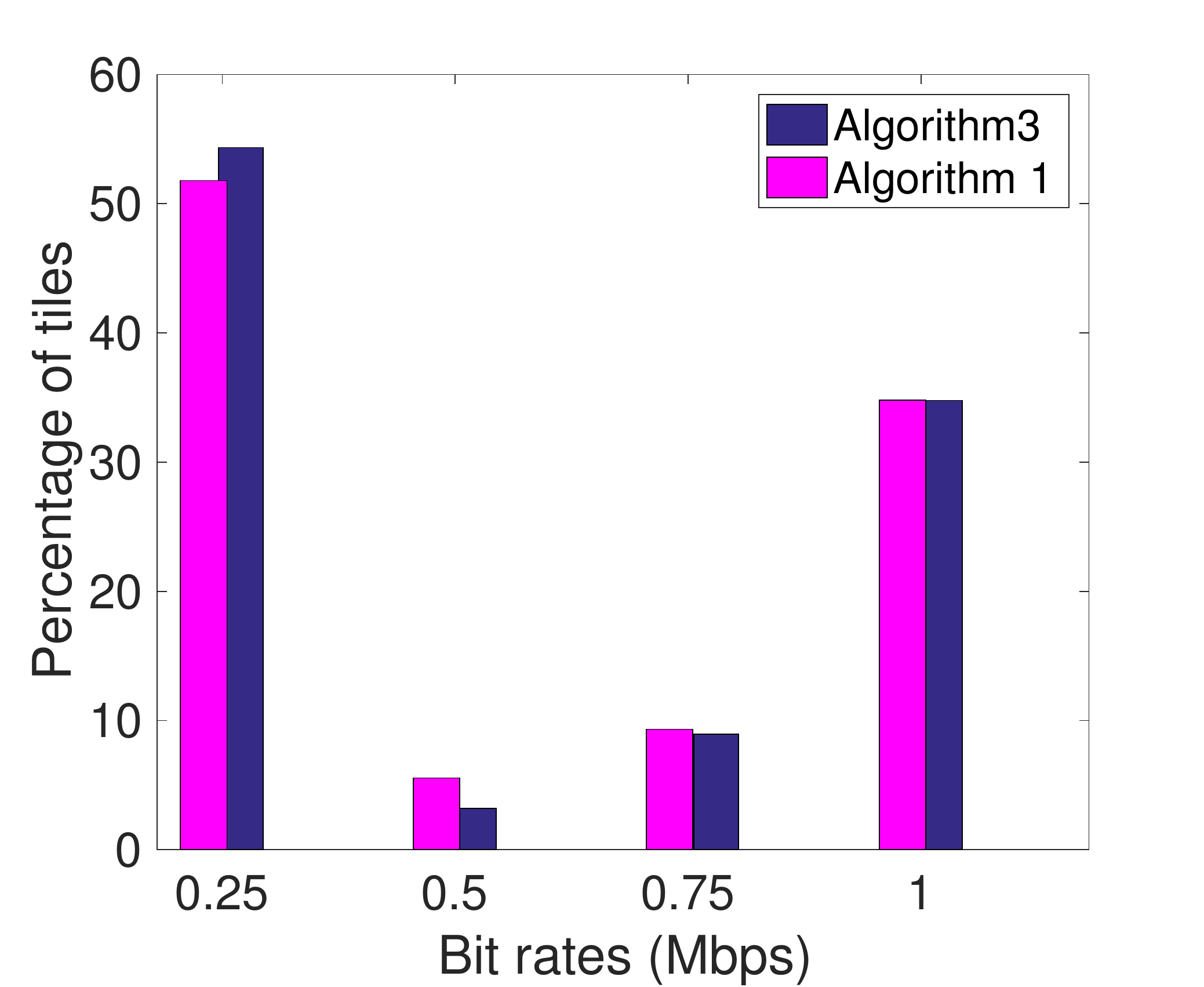}
\vspace{-0.2in}
\caption{The distribution of the bitrates of the downloaded tiles for $\beta=0.8, p=0.25$ and $W=2$.}
\vspace{-0.2in}
\label{fig:rate}
\end{minipage}\hfill
\begin{minipage}{0.33\textwidth}
\includegraphics[width=\textwidth]{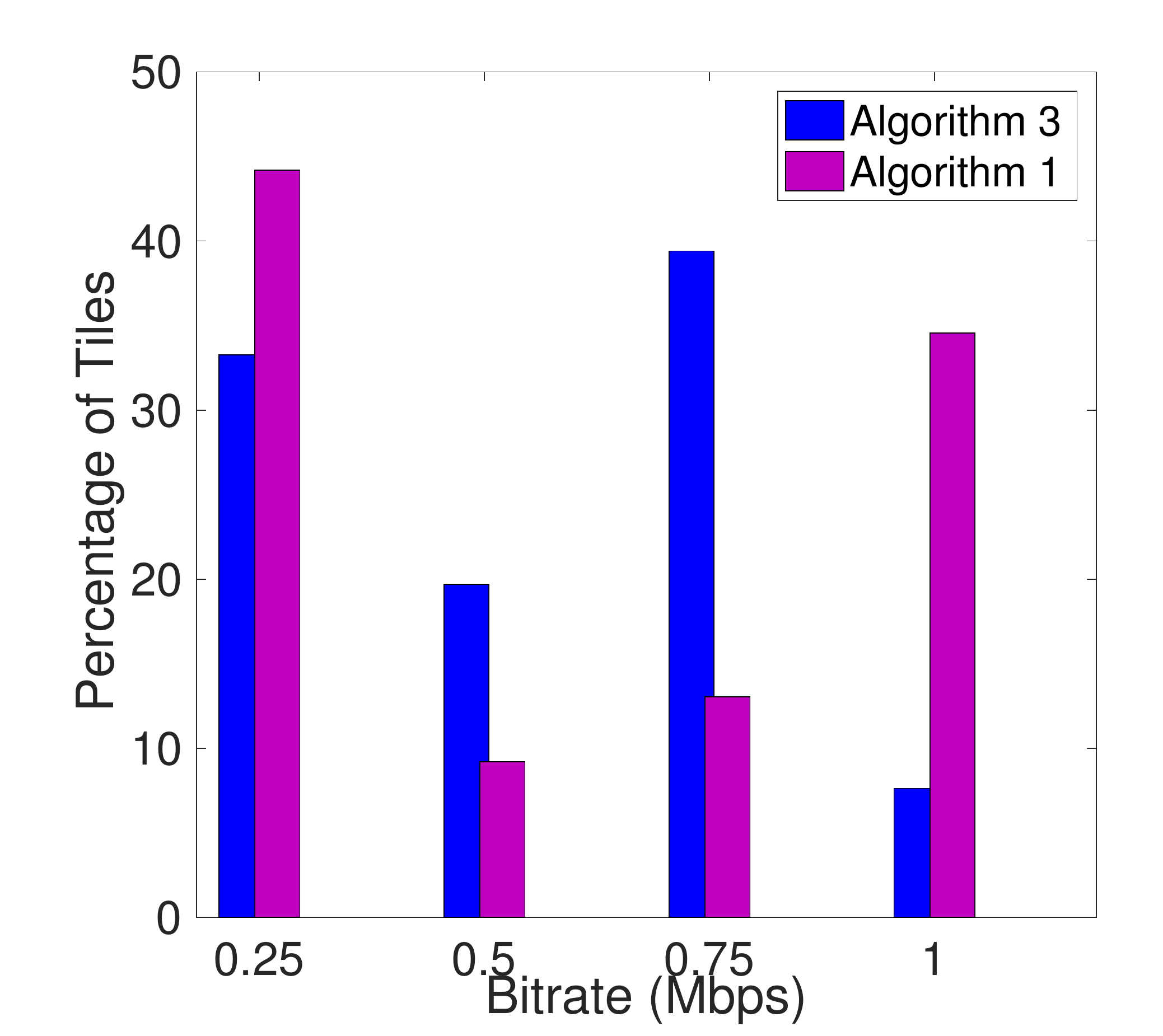}
\vspace{-0.2in}
\caption{The distribution of the bitrates of the downloaded tiles for higher uncertainty ($\beta=0.6$, $p=0.25$, and $W=2$).}
\vspace{-0.2in}
\label{fig:FoVerror}
\end{minipage}
\end{figure*}
\subsubsection{Impact on the average bit-rate within the FoV}
Fig.~\ref{fig:avg_rate} shows the average bit-rate within the FoV. Note that compared to the Baseline algorithm and the greedy one, our proposed algorithms (Algorithms~\ref{algo1} and \ref{algo3})  provide  significant improvement of the bit-rates. This is because Algorithms~\ref{algo1} and \ref{algo3} are obtained by maximizing the QoE metrics. The average bit-rate of all the algorithms decreases as $\beta$ increases, {\it i.e.}, as the FoV uncertainty increases instead of fetching smaller number of tiles at the highest qualities, more tiles needed to be fetched.   The baseline algorithm always fetches all the tiles at the same quality, thus, the average bit-rate attained by this algorithm is independent of the error.

\subsubsection{Impact on the Guaranteed rate} Fig.~\ref{fig:robust} gives the minimum rate among the tiles $\mathcal{A}_{\alpha,k}$ as a function of $\beta$. Recall from (\ref{eq:probust}) that  the FoV will be of at least this rate with probability $\alpha$. Since Algorithm 3 is the optimal (Theorem~\ref{thm:optimal}) for such a QoE metric, thus, it outperforms all the other algorithms. Algorithm 1 performs well   for higher values of $\beta$ (i.e. less uncertainty regarding the FoV). However, when $\beta$ is low (or, uncertainty of the FoV increases), Algorithm 3 gives a significantly higher rate compared to Algorithm 1. Algorithm~\ref{algo1} does not guarantee that the download rates are similar across the tiles.  When $\beta$ is low, more tiles are needed to be fetched at similar qualities. Algorithm~\ref{algo3} makes sure that the {\em quality variation will be poor with very low probability}, hence, it gives a higher guaranteed rate compared to Algorithm~\ref{algo1}. The baseline algorithm gives a constant rate as it is independent of $\beta$. Algorithm~\ref{algo3} outperforms the Baseline algorithm by more than 25\% even when $\beta$ is low as Algorithm~\ref{algo3} can provide higher guaranteed rates by not fetching some tiles outside the set $\mathcal{A}_{\alpha,k}$ at a higher rates. Algorithm 1 also outperforms the Baseline Algorithm as it does not fetch all the tiles with the same quality. However, the difference decreases as $\beta$ decreases. The greedy algorithm gives a $0$ rate as it does not fetch all the tiles within $\mathcal{A}_{\alpha,k}$, thus, it can not provide any guarantee on the rate with probability $\alpha$.

\subsubsection{Impact of Bandwidth Prediction Error}
We, next, study the impact of bandwidth prediction error on the performance of the algorithms. Specifically, we vary $p$; higher $p$ denotes higher variance. Fig.~\ref{fig:bandwidtherror} shows the cumulative distribution of the tiles over the bit-rates. Fig.~\ref{fig:bandwidtherror} shows  that for a higher $p$, more lower quality tiles are fetched. When the estimation error is high, the stall time increases whenever it fetches higher quality tiles which result into decreasing the qualities of the future tiles.

\subsubsection{Distribution of Rates over the tiles} Fig.~\ref{fig:rate} shows the distributions of the downloaded bitrates of tiles of our proposed algorithms. Note that both the Algorithms 1 and 3 fetch the tiles which have very low probability of being the part of the FoV  at a lower quality. That is why the lower quality tiles contribute to a significant portion. However, our analysis shows that the high quality tiles  percentage is also significant. This is because the tiles are fetched at the highest rates which are part of the FoV with a higher probability. 

\subsubsection{Impact of FoV prediction error}
As the variance in the FoV prediction increases, Fig.~\ref{fig:FoVerror} shows that more  tiles are downloaded in the lower quality by our proposed algorithms Algorithm~\ref{algo1} and \ref{algo3}. However, since Algorithm 3 makes sure that the quality across the tiles is consistent with a higher probability, it has to fetch a higher number of tiles at similar qualities. Thus, more tiles are fetched in a relatively low qualities compared to Algorithm 1. Algorithm 1 fetches the  tiles which have higher probability to be a part of the FoV at the highest qualities, however, the qualities may not be consistent with the high probability. This shows that the Algorithm 3 can provide more consistent viewing experience across the tiles if the prediction is not very accurate.

\section{Future Work}
In this paper, we provide two different QoE metrics. We proposed algorithms to maximize each of the QoE metrics. The characterization of the algorithm which can maximize the convex combinations of the two QoE metrics is a work for the future. We also did not consider the quality variation across the chunks in the viewing area in the QoE. The characterization of the QoE metric for the above parameter is also a work for the future. The implementation of our proposed algorithms in the practical system is also left for the future. 
\bibliographystyle{IEEEtran}
\bibliography{360degreeref}
\appendix
\subsection{Proof of Theorem~\ref{thm:optimal}}\label{sec:proof}
Suppose $R_{i,k}$, $\gamma_k$ for $k=1,\ldots, K$  are obtained using Algorithm~\ref{algo3}. We prove the theorem using Induction. 

For $K=1$:
Note that Algorithm~\ref{algo3} always checks whether the stall time increases before fetching higher quality tiles of the chunk. Hence, Algorithm~\ref{algo3} gives the highest possible rate by keeping the stall time minimum. Since $\lambda$ is large enough, thus, it is optimal for $K=1$. 

Now suppose that it is true for $K=k$, we now show that it is true for $K=k+1$. Suppose not, {\it i.e.}, it is not optimal for $K=k+1$. 

Specifically, there exists $R^{\prime}_{i,j}, \gamma^{\prime}_{j}$ for $j=1,\ldots, k+1$ which can give a higher objective value compared to Algorithm~\ref{algo3}. Note that since $\lambda$ is large enough, thus, the stall time should not increase compared to the solution $R_{i,k}$. Note that $\gamma^{\prime}_j=\min\{R^{\prime}_{i,j}: i\in \mathcal{A}_{\alpha,j}\}$.
%
Because of the linear utility, the utility attained by Algorithm~\ref{algo3} is 
\begin{align}
\sum_{j=1}^{k+1}\gamma_j.
\end{align}
Similarly, the utility attained by the solution $R^{\prime}_{i,k}$ is
\begin{align}
\sum_{j=1}^{k+1}\gamma^{\prime}_j
\end{align}
Note that
\begin{align}\label{eq:gam1}
\sum_{j=1}^{k+1}\gamma^{\prime}_j>\sum_{j=1}^{k+1}\gamma_j.
\end{align}

Since the statement is true for $K=k$,  we must have 
\begin{align}\label{eq:gam2}
\sum_{j=1}^{k}\gamma^{\prime}_j\leq \sum_{j=1}^{k}\gamma_j. 
\end{align}
The total bits to be downloaded for the solution $R^{\prime}_{i,j}$ is
\begin{align}\label{eq:bit}
L[\sum_{j=1}^{k+1}|\mathcal{A}_{\alpha,j}|\gamma^{\prime}_j+\sum_{j=1}^{k+1}R_{0}|\mathcal{A}^{C}_{\alpha,j}|]
\end{align}
Since it does not increase the stall time, thus, the above number of bits can be downloaded within the deadline $t_{k+1}$. 
Now,
\begin{align}\label{eq:inequal}
\gamma^{\prime}_{k+1}|\mathcal{A}_{\alpha,k+1}|-\sum_{j=1}^{k}(\gamma_j-\gamma^{\prime}_j)|\mathcal{A}_{\alpha,k+1}|\nonumber\\
\leq \gamma^{\prime}_{k+1}|\mathcal{A}_{\alpha,k+1}|-\sum_{j=1}^{k}(\gamma_j-\gamma^{\prime}_j)|\mathcal{A}_{\alpha,k}|
\end{align}
where the last inequality follows from (\ref{eq:gam2}) and Assumption~\ref{assum:ordering}. Hence, the total number of bits that can be downloaded if $\gamma_{k+1}=\gamma^{\prime}_{k+1}-\sum_{j=1}^{k}(\gamma_j-\gamma^{\prime}_j)$ is given by 
\begin{align}
& L[\gamma^{\prime}_{k+1}|\mathcal{A}_{\alpha,k+1}|-\sum_{j=1}^{k}(\gamma_j-\gamma^{\prime}_j)|\mathcal{A}_{\alpha,k+1}|+\sum_{j=1}^{k+1}R_{0}|\mathcal{A}^{C}_{\alpha,j}|\nonumber\\& +\sum_{j=1}^{k}\gamma_j|\mathcal{A}_{\alpha,j}|]
\end{align}
Hence, from (\ref{eq:inequal}) and Assumption~\ref{assum:ordering} the above expression can be upper bounded by
\begin{align}
& L[\gamma^{\prime}_{k+1}|\mathcal{A}_{\alpha,k+1}|-\sum_{j=1}^{k}\gamma_j|\mathcal{A}_{\alpha,k}|+\sum_{j=1}^{k}\gamma_j|\mathcal{A}_{\alpha,j}|+\nonumber\\& \sum_{j=1}^{k+1}\gamma^{\prime}_j|\mathcal{A}_{\alpha,j}|+\sum_{j=1}^{k+1}R_{0}|\mathcal{A}^{C}_{\alpha,j}|]\nonumber\\
& \leq L[\gamma^{\prime}_{k+1}|\mathcal{A}_{\alpha,k+1}|+\sum_{j=1}^{k}\gamma^{\prime}_j|\mathcal{A}_{\alpha,j}|+\sum_{j=1}^{k+1}R_{0}|\mathcal{A}^{C}_{\alpha,j}|]
\end{align}
where the last inequality again follows from Assumption~\ref{assum:ordering}. However, the above expression is exactly equal to (\ref{eq:bit}), thus, the utility attained in the $k+1$th chunk without increasing the stall time is $\gamma^{\prime}_{k+1}-\sum_{j=1}^{k}(\gamma_j-\gamma^{\prime}_j)$. Thus, the utility attained by the solution given in Algorithm~\ref{algo3} is at least
\begin{align}
\sum_{j=1}^{k+1}(\gamma_{j}-\gamma_j+\gamma^{\prime}_j)+\gamma^{\prime}_{k+1}=\sum_{j=1}^{k+1}\gamma^{\prime}_{j}
\end{align}
which contradicts (\ref{eq:gam1}). Hence, the statement is true for $K=k+1$ assuming that it is true for $K=k$. 

Thus, from the principle of mathematical induction the result follows. \qed

\end{document}